\newcommand{\nn}{\mathbb{N}}
\newcommand{\rr}{\mathbb{R}}
\renewcommand{\tt}{\mathbb{T}}
\newcommand{\uu}{\mathbb{U}}
\newcommand{\xx}{\mathbb{X}}
\newcommand{\yy}{\mathbb{Y}}
\newcommand{\zz}{\mathbb{Z}}
\newcommand{\NN}{\mathcal{N}}
\newcommand{\PP}{\mathcal{P}}
\newcommand{\GG}{\mathcal{G}}
\renewcommand{\P}{{\rm Pr}}
\newcommand{\ba}{\textbf{a}}
\newcommand{\bd}{\textbf{d}}
\newcommand{\bU}{\textbf{U}}
\newcommand{\bu}{\textbf{u}}
\newcommand{\bT}{\textbf{T}}
\newcommand{\bpi}{\bm{\pi}}
\newcommand{\bPi}{\bm{\Pi}}
\newcommand{\bGamma}{\bm{\Gamma}}
\newcommand{\DS}{{\rm DS}}
\newcommand{\sat}{{\rm Sat}}
\newcommand{\uniform}{\text{Unif}}
\newcommand{\BR}{{\rm BR}}
\newcommand{\eq}[1][0]{\bm{\Gamma}^{#1{\rm \text{-}eq}}   }
\newtheorem{assumption}{Assumption}
\newcommand{\by}[1]{{\color{black}#1}}
\begin{document}

\title{Satisficing Paths and Independent Multi-Agent Reinforcement Learning in Stochastic Games\thanks{This work was supported in part by the Natural Sciences and Engineering Research Council of Canada.}
}

\author{Bora Yongacoglu\thanks{Department of Mathematics and Statistics, Queen's University}
\and G\"{u}rdal Arslan\thanks{Department of Electrical Engineering, University of Hawaii at Manoa}
\and Serdar Y\"{u}ksel\footnotemark[2]
}

\maketitle

\begin{abstract}%
In multi-agent reinforcement learning (MARL), independent learners are those that do not observe the actions of other agents in the system. Due to the decentralization of information, it is challenging to design independent learners that drive play to equilibrium. This paper investigates the feasibility of using \emph{satisficing} dynamics to guide independent learners to approximate equilibrium in stochastic games. For $\epsilon \geq 0$, an $\epsilon$-satisficing policy update rule is any rule that instructs the agent to not change its policy when it is $\epsilon$-best-responding to the policies of the remaining players; $\epsilon$-satisficing paths are defined to be sequences of joint policies obtained when each agent uses some $\epsilon$-satisficing policy update rule to select its next policy. We establish structural results on the existence of $\epsilon$-satisficing paths into $\epsilon$-equilibrium in both symmetric $N$-player games and general stochastic games with two players. We then present an independent learning algorithm for $N$-player symmetric games and give high probability guarantees of convergence to $\epsilon$-equilibrium under self-play. This guarantee is made using symmetry alone, leveraging the previously unexploited structure of $\epsilon$-satisficing paths. 
\end{abstract}

\begin{keywords}
  Multi-agent reinforcement learning, independent learners, learning in games, stochastic games, decentralized systems
\end{keywords}

\begin{MSCcodes}
91A15, 91A26, 60J20, 93A14
\end{MSCcodes}

\author{Bora Yongacoglu \thanks{Department of Mathematics and Statistics, Queen's University}
\and G\"{u}rdal Arslan \thanks{Department of Electrical Engineering, University of Hawaii at Manoa}
\and Serdar Y\"{u}ksel\footnotemark[1]
}

\section{Introduction}    
Reinforcement learning (RL) algorithms use experience and feedback information to improve one's performance in a control task \cite{sutton2018reinforcement}. In recent years, the field of RL has advanced tremendously both in terms of fundamental theoretical contributions (e.g. \cite{agarwal2021theory}, \by{\cite{schulman2015trust,schulman2017proximal}}) and successful applications (e.g. \cite{silver2016mastering,silver2017mastering}, \cite{mnih2015human},\cite{brown2018superhuman}). These advances have led to the deployment of RL algorithms in large-scale systems in which many agents act, observe, and learn in a shared environment. Multi-agent reinforcement learning (MARL) is the study of emergent behaviour in complex, strategic environments, and is one of the important frontiers in modern artificial intelligence research.

The literature on MARL is relatively small when compared to that of single-agent RL, and this owes largely to the inherent challenges of learning in multi-agent settings. The first such challenge is of decentralized information: some relevant information will be unavailable to some of the players. This may occur due to strategic considerations, as competing agents may wish to hide their actions or knowledge from their rivals (as studied in \cite{ornik2018deception}), or it may occur simply because of obstacles in communicating, observing, or storing large quantities of information in decentralized systems. 

The second challenge inherent to MARL comes from the non-stationarity of the environment from the point of view of any individual agent (see, for instance, the survey by \cite{hernandez2017survey}). As an agent learns how to improve its performance, it will alter its behaviour, and this can have a destabilizing effect on the learning processes of the remaining agents, who may change their policies in response to outdated strategies. Notably, this issue arises when one tries to apply single-agent RL algorithms---which typically rely on state-action value estimates or gradient estimates that are made using historical data---in multi-agent settings. A number of studies, including \cite{tan1993multi} and \cite{Claus1998}, have reported non-convergent play when single-agent algorithms using local information are employed, without modification, in multi-agent settings.  

Designing decentralized learning algorithms with desirable convergence properties is a task of great practical importance that lies at the intersection of the two challenges above. The notion of decentralization considered in this paper involves agents that observe a global state variable but do not observe the actions of other agents. Learning algorithms suitable for this information structure are called \textit{independent learners} in the machine learning literature \cite{zhang2021multi, matignon2012survey, matignon2009coordination, wei2016lenient}; they have also been called \textit{payoff-based} and \textit{radically uncoupled} in the control and game theory literatures, respectively, \cite{Marden2009payoff, marden2012revisiting, foster2006regret}.

For our theoretical framework, we consider stochastic games with discounted costs. In this setting, our overarching goal is to provide MARL algorithms that are suitable for independent learners in a complex system, require little coordination among agents, and come with provable guarantees for long-run performance. To inform the development of such algorithms, this paper identifies structural properties of games that can be leveraged in algorithm design. We then illustrate the usefulness of the identified structure by providing an independent learning algorithm and proving that, under mild conditions, this algorithm leads to approximate equilibrium policies in self-play. 

The structure we consider relates to \textit{satisficing}, a natural approach to optimization that, as we discuss in \S\ref{ss:literature} and \S\ref{sec:epsilon-paths}, is used in several existing independent MARL algorithms. An agent that uses satisficing searches its policy space until it finds a policy that is deemed satisfactory and sufficient, at which point it settles on this policy. The agent continues to use this policy as long as the policy remains satisfactory. At a high level, the satisficing paths property formalized in \S\ref{sec:epsilon-paths} holds for a game and a subset of joint policies if there exist policy update rules of the satisficing variety that can drive play to equilibrium from any initial policy in the given policy subset. We show that two important classes of games---namely symmetric $N$-player games and general two-player games---admit this property within the set of stationary policies, which suggests that independent MARL algorithms that employ satisficing to update policies can be used to drive play to equilibrium in such games. 
 
For $N$-player symmetric stochastic games, we build on this finding to present an algorithm that drives play to approximate equilibrium. This algorithm uses the exploration phase technique of \cite{AY2017} for policy evaluation, but differs in how players update their policies. Here, players discretize their policy space with a quantizer and use a satisficing rule to explore this quantized set, occasionally using random search when unsatisfied. Of note, here we do not restrict players to using deterministic stationary policies (pure strategies), as was done in \cite{AY2017}, and allow for use of randomized stationary policies (mixed strategies), enabling convergence to near equilibrium in games that do not admit near equilibria in the set of deterministic stationary policies.

By relying on the satisficing paths property formalized in \S\ref{sec:epsilon-paths}, our proof of convergence does not assume any further structure in the game beyond symmetry. To our knowledge, this is the first algorithm with formal convergence guarantees in this class of games: as we will discuss below, previous rigorous work on independent learners has focused on different---highly structured---classes of games, such as teams, potential games, weakly acyclic games, and two-player zero-sum games.

\vspace{5.5pt} 

\noindent \textbf{Contributions:}
\begin{itemize}
	\item[(i)] For any stochastic game and $\epsilon \geq 0$, we define \textit{$\epsilon$-satisficing paths} (Definition~\ref{def:epsilon-path}) and a related  \textit{$\epsilon$-satisficing paths property} (Definition~\ref{def:paths-property});
	
	\item[(ii)] In Theorem \ref{thm:symmetric-paths}, we prove that  symmetric games have the $\epsilon$-satisficing paths property, for all $\epsilon \geq 0$. \by{Moreover, our proof technique shows that, in symmetric games, the $\epsilon$-satisficing paths property is compatible with quantization, provided the quantization is sufficiently fine and symmetric;}
	
	\item[(iii)] In Theorem~\ref{thm:two-player-paths}, we prove that any two-player game has the $\epsilon$-satisficing paths property, for all $\epsilon$ \by{$\geq$} $0$;
	
	\item[(iv)] We present  Algorithm~\ref{algo:main} for symmetric stochastic games and, in Theorem \ref{thm:main}, we prove that self-play drives the policy process to $\epsilon$-equilibrium.

\end{itemize}

\subsection{Related Work} \label{ss:literature}

Beginning with Brown's fictitious play algorithm \cite{Brown1951iterative,Robinson1951iterative}, the study of learning in games is nearly as old as game theory itself. There is a large \by{and ongoing} literature on fictitious play and its variants, with most works in this line considering a different information structure than the decentralized one studied here. The bulk of work on fictitious play focuses on settings with perfect monitoring of the actions of other players. \by{This tradition includes the recent works of \cite{leslie2020best} and \cite{sayin2022fictitious}, which study fictitious play-type algorithms with perfect monitoring in stochastic games. Additionally, multiple recent studies have considered fictitious play-type algorithms for various decentralized information structures, such as \cite{swenson2018distributed, eksin2017distributed}, and \cite{sayin2021decentralized}.}

A number of early empirical works studied the behaviour resulting from independent RL agents coexisting in various shared environments, e.g. \cite{tan1993multi, sen94, Claus1998}. Contemporaneously, stochastic games were proposed as a theoretical framework for MARL  \cite{Littman1994}. Several \textit{joint action learners} (learners that require access to the past actions of all other agents) were then proposed for playing stochastic games and proven to converge to equilibrium under various assumptions. A representative sampling of this stream of algorithms includes the Minimax Q-learning algorithm of \cite{Littman1994}, the Nash Q-learning algorithm of \cite{Hu2003}, and the Friend-or-Foe Q-learning algorithm of \cite{Littman2001ffq}. 

Early work on independent learners includes the following: \cite{Claus1998} popularized the terminology of joint action learners and independent learners and stated conjectures; \cite{lauer2000} presented an independent learner for fully cooperative games with deterministic state transitions and cost realizations and proved its convergence to optimality in that setting; and \cite{Bowling} proposed the WoLF-Policy Hill Climbing algorithm for general-sum stochastic games and conducted simulation studies.

Due in part to the challenges posed by non-stationarity and decentralized information, most contributions to the literature on independent learners focused either on the stateless case of repeated games and produced formal results, such as the works of \cite{leslie2005individual, foster2006regret, germano2007global, Chasparis2013aspiration, Marden2009payoff, marden2012revisiting,  marden2014achieving}, or otherwise studied the multi-state setting and presented only empirical results, such as the works \cite{matignon2007hysteretic, matignon2009coordination, wei2016lenient}.

More recently, a number of papers have studied independent learners for games with non-trivial state dynamics while still presenting rigorous guarantees. In \cite{daskalakis2021independent}, the authors studied the convergence of single-agent policy gradient algorithms employed in episodic two-player zero-sum games. It was shown that if the players' policy updates satisfy a particular two-timescale rule, with one player updating quickly and the other updating slowly, then policies approach an approximate equilibrium. A complementary study was conducted in \cite{sayin2021decentralized}, where a different learning rule was proposed for non-episodic two-player zero-sum games. In this setting, a convergence result for the value function estimates was provided.

The preceding works produce rigorous results by taking advantage of the considerable structure of two-player zero-sum games, which are inherently adversarial strategic environments. Another class of games possessing very different exploitable structure is that of stochastic teams and their generalizations of weakly acyclic games and common interest games. An independent learning algorithm for weakly acyclic games was presented in \cite{AY2017}. By synchronized policy updating, this algorithm is able to drive play to equilibrium via inertial best-response dynamics. In a recent paper \cite{YAY-TAC}, we modify this algorithm for use in common interest games and give high probability guarantees of convergence to team optimal policies in that setting.

Like the preceding works, this paper presents an independent learning algorithm for many state stochastic games and comes with convergence guarantees. However, this paper differs from those works in several ways. First, the class of games for which our learning algorithm has formal guarantees is distinct from those classes previously mentioned; at present, no algorithm comes with proven guarantees for general $N$-player symmetric games. Second, this paper also studies policy dynamics in games at large, beyond the learning setting. The structural results on $\epsilon$-satisficing paths are of independent interest, and may be of use to other algorithm designers or to those studying equilibrium computation in stochastic games.

\subsection*{Organization} The remainder of the paper is organized as follows: Section \ref{sec:model} describes the stochastic games model and presents background results. Section \ref{sec:epsilon-paths} introduces $\epsilon$-satisficing paths and proves structural results for symmetric $N$-player games and general two-player games. Building on these structural results, in Sections~\ref{sec:oracle} and \ref{sec:full-algorithm}, we develop an independent learning algorithm for $N$-player symmetric games and give convergence guarantees. The results of a simulation study are summarized in Section~\ref{sec:simulation}. Additional discussion on related and future work is given in Section~\ref{sec:discussion}. The final section concludes. Proofs omitted from the body of the text are given in the appendices. %

\subsection*{Notation} \label{ss:notation} 

$\zz_{\geq 0}$ and $\mathbb{N}$ denote the nonnegative and positive integers, respectively. For a finite set $S$, $\mathcal{P}(S)$ denotes the set of probability distributions over $S$. Given two sets $S, S'$, we let $\mathcal{P} ( S' | S)$ denote the set of stochastic kernels on $S'$ given $S$. An element $\mathcal{T} \in \mathcal{P} ( S' | S)$ is a collection of probabilities distributions on $S'$, with one distribution for each $s \in S$, and we write $\mathcal{T} ( \cdot | s )$ for $s \in S$ to make the dependence on $s$ explicit. We write $Y \sim f$ to denote that the random variable $Y$ has distribution $f$. If the distribution of $Y$ is a mixture of other distributions, say with mixture components $f_i$ and weights $p_i$ for $1 \leq i \leq n$, we write $Y \sim \sum_{i = 1}^n p_i f_i$. We use $\textbf{1} \{ \cdot \}$ to denote the indicator function of a given event. For a finite set $S$, $\uniform(S)$ denotes the uniform distribution over $S$ and $2^S$ denotes the set of subsets of $S$. %

\section{Background and Technical Preliminaries} \label{sec:model}

\subsection{Stochastic Games}
A finite stochastic game with discounted costs is described by the list
\begin{equation} \label{def:stochasticGame}
\GG = (\NN, \xx, \{ \mathbb{U}^i , c^i , \beta^i \}_{i \in \NN}, P, \nu_0 ).
\end{equation}
The components of $\GG$ are the following: $\NN$ is a finite set of $N \in \nn$ players/agents. $\xx$ is a finite set of states. For agent $i \in \NN$, $\uu^i$ is a finite set of actions, and we write $\bU := \times_{i \in \NN} \uu^i$. An element $\bu \in \bU$ is called a \emph{joint action}. For agent $i$, $c^i : \xx \times \bU \to \rr$ is a stage cost function, and $\beta^i \in [0,1)$ is a discount factor. A random initial state $x_0 \in \xx$ is given by $x_0 \sim \nu_0$ where $\nu_0 \in \PP(\xx)$. State transitions are governed by the transition kernel $P \in \PP ( \xx | \xx \times \bU )$.

At time $t \in \zz_{\geq 0}$, the state variable is denoted by $x_t \in \xx$. Each player $i \in \NN$ observes its local observation variable $y^i_t$, \by{to be described shortly,} and selects its action $u^i_t$ \by{$\in \uu^i$}. The joint action is denoted by $\bu_t = (u^i_t)_{i \in \NN} \in \bU$. Upon selection of the joint action $\bu_t$, each player $i \in \NN$ observes its realized cost $c^i ( x_t, \bu_t) \in \rr$, and the system transitions to state $x_{t+1}$, where $x_{t+1} \sim P ( \cdot | x_t, \bu_t)$.

To complete the \by{probabilistic} description of the game, we now discuss how the sequence of joint actions is generated. Each player $i \in \NN$ uses a \textit{policy} to select its sequence of actions $\{ u^i_t\}_{t \geq 0}$, using only information that is locally available at the time of each decision. We use $y^i_t$ to denote the observation variable for player $i$ at time $t \geq 0$, and we let $h^i_t$ denote player $i$'s information variable at time $t$, according to which player $i$ selects $u^i_t$.  We make the following assumption throughout this paper. 

\begin{assumption}[Independent Learners]  \label{ass:independent-learners}
For each $i \in \NN$, player $i$'s observation variables $\{ y^i_t \}_{t \geq 0}$ and information variables $\{h^i_t\}$ are given by 
\begin{itemize}
	\item $y^i_0 = x_0$ and \emph{$y^i_{t+1} = ( u^i_t, c^i ( x_t, \bu_t ), x_{t+1})$} for $t \geq 0$;
	\item $h^i_0 = y^i_0$ and $h^i_{t+1} = ( h^i_t, y^i_{t+1})$ for $t \geq 0$;
\end{itemize}
\end{assumption}

We note that this is the standard informational assumption in the literature on \textit{independent learners} \cite{zhang2021multi, matignon2012survey, matignon2009coordination, wei2016lenient,daskalakis2021independent}. The independent learner (IL) paradigm is one of the principal alternatives to the joint action learner (JAL) paradigm, studied previously in \cite{Littman1994,Littman2001ffq,Hu2003} among many others. The key difference is that the JAL paradigm assumes that each agent $i$ gets to observe the complete joint action profile $\bu_t$ after it is played; i.e. $y^i_{t+1} = ( \bu_t, c^i ( x_t, \bu_t), x_{t+1})$. In contrast, in the IL paradigm, player $i$ does not view $\bu_t$ directly at any time. 

\begin{definition}[Policies]
For player $i \in \NN$, define $\yy^i := \uu^i \times \rr \times \xx$ and $H^i_t := \xx \times ( \yy^i )^t $ for $t \geq 0$. A sequence $\pi^i = \{ \pi^i_t \}_{t \geq 0}$ of stochastic kernels is called a \emph{policy for $i$} if $\pi^i_t \in \PP ( \uu^i | H^i_t )$ for every $t \geq 0$. The set of policies for $i$ is denoted by $\Gamma^i$. 
\end{definition}

\by{ 
When following the policy $\pi^i$, agent $i$ selects its action $u^i_t$ by sampling $u^i_t \sim \pi^i_t ( \cdot | h^i_t )$. Although agents can, in principle, use arbitrarily complicated, history-dependent policies to select their actions, we will restrict our analysis to the subset of stationary policies, defined below. Such a restriction entails no loss in optimality for a particular player, provided the remaining players use stationary policies. Focusing on stationary policies is quite natural: we refer the reader to \cite{levy2013discounted} for an excellent elaboration. 
}

\begin{definition}
A policy $\pi^i \in \Gamma^i$ is called \emph{stationary} if the following holds for any $t, k \geq 0$: if \emph{$h^i_t = (x_0, u^i_0, c^i ( x_0, \bu_0 ), \cdots, x_{t-1},  u^i_{t-1}, c^i (x_{t-1}, \bu_{t-1}), x_t ) \in H^i_t$} and \emph{$\tilde{h}^i_k = (\tilde{x}_0, \tilde{u}^i_0, c^i ( \tilde{x}_0, \tilde{\bu}_0), \cdots, \tilde{x}_{k-1}, \tilde{u}^i, c^i ( \tilde{x}_{k-1}, \tilde{\bu}_{k-1}), \tilde{x}_k) \in H^i_k$} are such that \emph{$x_t = \tilde{x}_k$}, then $\pi^i_t \left( \cdot | h^i_t \right) = \pi^i_k \big( \cdot | \tilde{h}^i_k \big)$. 
\end{definition}

\by{In words, a stationary policy selects each action according to a probability distribution that depends only on the present state and not on the history or the time index.} For player $i \in \NN$, we denote the set of stationary policies by $\Gamma^i_S$ and identify $\Gamma^i_S$ with $\PP ( \uu^i | \xx)$. To ease the notational burden, in the sequel, we treat stationary policies for player $i$ as stochastic kernels on $\uu^i$ given $\xx$, without reference to the complete information or history variables. \by{Henceforth, unqualified reference to a policy shall be understood to mean a stationary policy.}

We use boldface characters to denote joint objects, such as $\bu_t = ( u^i_t)_{ i \in \NN}$ above. To isolate the role of a particular player $i \in \NN$, a joint object with $i$'s component removed is written using $-i$ in the agent index, e.g. $\bu^{-i} = (u^j )_{ j \in \NN, j \not= i}$. 

Given a joint policy $\bpi$, we use $\P^{\bpi}$ to denote the resulting probability measure on trajectories $\{ (x_t, u_t) \}_{t \geq 0}$ and we use $E^{\bpi}$ to denote the associated expectation.\footnote{\by{In principle, we should also introduce notation for the initial distribution in the probability measure, such as $\P^{\bpi}_{\nu_0}$. We omit such notation throughout this paper, because we typically condition on an initial state, making the dependence on the initial distribution redundant. In instances where we do not explicitly condition on an initial state, it should be understood that the stated property holds for any initial distribution.}} The objective of agent $i \in \NN$ is to find a policy that minimizes the expectation of its series of discounted  costs, given by
\begin{equation} \label{eq:discountedCost}
J^i   ( \bpi , x ) :=  E^{\bpi} \left[   \sum_{t \geq 0} (\beta^i)^t c^i\left(x_t,u^i_t, \bu^{-i}_t \right) \middle| x_0 = x \right]
\end{equation}
for all $x \in \xx$. Note that agent $i$ controls only its own policy, $\pi^i$, but its \by{objective function} is affected by the policies of the remaining agents. This motivates the following definitions.

\begin{definition}
Let $i \in \NN$, $\epsilon \geq 0$, and let $\Pi^i \subseteq \Gamma^i$. For $\bpi^{-i} \in \bGamma^{-i}$, a policy $\pi^{*i} \in \Pi^i$ is called an $\epsilon$-\emph{best-response} to $\bpi^{-i}$ \emph{over $\Pi^i$} if 
\begin{equation} 
	J^i  ( \pi^{*i}, \bpi^{-i}  ,  x )  \leq \inf_{\pi^i\in \Pi^i} J^i (\pi^{i},\bpi^{-i} ,  x )  + \epsilon, \quad \forall x \in \xx.
\end{equation}
\end{definition}

\begin{definition}
For fixed $i \in \NN$, $\epsilon \geq0$, $\Pi^i \subseteq \Gamma^i$, and $\bpi^{-i} \in \bGamma^{-i}$, we let $\BR^i_{\epsilon} ( \bpi^{-i} , \Pi^i )$ denote player $i$'s (possibly empty) set of $\epsilon$-best-responses to $\bpi^{-i}$ over $\Pi^i$. 
\end{definition}

\begin{definition}
Let $\epsilon \geq 0$. A joint policy $\bpi^{*} \in \bGamma$ constitutes an $\epsilon$-equilibrium if $\pi^{*i} \in \BR^i_{ \epsilon} ( \bpi^{*-i} , \Gamma^i )$ for every player $i \in \NN$. 
\end{definition}

When $\epsilon = 0$, a $0$-best-response is simply called a best-response and a $0$-equilibrium is called an equilibrium. When the set $\Pi^i$ over which $i$ is optimizing is clear from context (typically, $\Pi^i = \Gamma^i_{S}$), we may omit ``over $\Pi^i$'' and simply write $\BR^i_{\epsilon}  ( \bpi^{-i})$.

\by{
For $\epsilon \geq 0$, we let $\eq[\epsilon]_{S}$ denote the set of $\epsilon$-equilibrium policies. It is well-known that, for any finite stochastic game with discounted costs, the set of 0-equilibrium policies is non-empty \cite{fink1964equilibrium}.
}

\vspace{5pt}
\noindent The following definition will be useful in the coming sections.
\begin{definition}
Let $\xi > 0$ and $i \in \NN$. A stationary policy $\pi^i \in \Gamma^i_{S}$ is called \emph{$\xi$-soft} if $\pi^i ( a^i | x ) \geq \xi$ for every $x \in \xx$ and $a^i \in \uu^i.$ The policy $\pi^i \in \Gamma^i_{S}$ is called \emph{soft} if it is $\xi$-soft for some $\xi > 0$. 
\end{definition}

\subsection{Symmetric Games} \label{ss:symmetric-games}

In some applications, the strategic environment being modelled exhibits symmetry in the agents. To model such settings, we define a class of symmetric games with the following properties: (1) each agent has the same set of actions; (2) the state dynamics depend only on the profile of actions taken by all players, without special dependence on the identities of the agents. That is, permuting the agents' actions in a joint action leaves the conditional probabilities for the next state unchanged; (3) such a permutation results in a corresponding permutation of costs incurred. We formalize and clarify these points in the definition below. 

First, we introduce additional notation: if $\uu^i = \uu^j$ for all $i,j \in \NN$, given a permutation $\sigma : \NN \to \NN$ and joint action $\ba = (a^i)_{i \in \NN}$, we define $\sigma( \ba) \in \bU$ to be the joint action in which $i$'s component is given by $ a^{\sigma(i)}$. That is, player $i$'s action in $\sigma(\ba)$ is given by player $\sigma(i)$'s action in $\ba$: $\sigma( \ba )^i = a^{\sigma(i)}$.

\begin{definition}[Symmetric Game] \label{def:symmetric game}
A stochastic game $\GG$ given by \eqref{def:stochasticGame} is called \emph{symmetric} if the following holds:
\begin{itemize} 
	\item There exists a set $\uu$ and a constant $\beta \in ( 0,1)$ such that $\uu^i = \uu$ and $\beta^i = \beta$ for all $i \in \NN$; 
	\item For any $i \in \NN$, permutation $\sigma$ on $\NN$, and \emph{$( x , \ba ) \in \xx \times \bU$}, we have 
		\[
		c^i ( x, \sigma( \emph{\ba}) ) = c^{\sigma(i)} ( x, \emph{\ba} ) , \quad \text{and} \quad P \left( \cdot | x, \emph{\ba}   \right) = P \left( \cdot | x , \sigma(\emph{\ba} )  \right). 
		\]
\end{itemize}
\end{definition}

Observe the following useful facts about symmetric games.
\begin{lemma} \label{lemma:symmetric-equality}
Let $\GG$ be a symmetric game and let $\bpi \in \bGamma_{S}$ be a stationary joint policy. For $i, j \in \NN$, if $\pi^i  = \pi^j$, then \by{$J^i ( \pi^i, \bpi^{-i}  ,  x ) = J^j (\pi^j, \bpi^{-j}  , x )$, for any $x \in \xx.$}
%
\end{lemma}

\begin{proof} 
Let $\sigma$ be the permutation on $\NN$ such that $\sigma(i)= j$, $\sigma(j) = i$, and $\sigma(p) = p$ for all $p \in \NN \setminus \{i, j \}$. For any $t \geq 0$ and joint action $\ba \in \bU$, it follows from $\pi^i = \pi^j$ that $\P^{\bpi} ( \bu_t = \ba) = \P^{\bpi} ( \bu_t  = \sigma ( \ba ) )$. Then, since $c^{i} ( x, \sigma( \ba ) ) = c^j ( x_, \ba )$ for any state $x \in \xx$, we have the following:
\begin{align*}
&E^{\bpi} \left[ \beta^t c^i ( x_t , \bu_t ) \middle| x_t  = x \right] = \sum_{\ba \in \bU } \P^{\bpi} \left( \bu_t = \sigma ( \ba ) \middle| x_t  = x  \right) \beta^t c^i ( x, \sigma (\ba)) \\
									&= \sum_{\ba \in \bU } \P^{\bpi} \left( \bu_t =  \ba \middle| x_t = x  \right) \beta^t c^j ( x,  \ba ) = E^{\bpi} \left[ \beta^t c^j ( x_t , \bu_t ) \middle| x_t = x \right].
\end{align*}
It follows that $E^{\bpi} [ \beta^t c^i ( x_t, \bu_t ) ] = E^{\bpi} [ \beta^t c^j ( x_t, \bu_t )]$, and the result follows by summing over times $t \geq 0$ and taking limits. 
\end{proof}

\begin{corollary} \label{corollary:symmetric-equality} 
Let $\GG$ be a symmetric game and let $\bpi \in \bGamma_{S}$ be a stationary joint policy. For $i, j \in \NN$, if $\pi^i  = \pi^j$, then 
\[
\pi^i \in \BR^i_{\epsilon} ( \bpi^{-i}, \Gamma^i )  \iff \pi^j \in \BR^j_{\epsilon} ( \bpi^{-j}, \Gamma^j ) 
\]
\end{corollary}

\subsection{Background on Learning Algorithms}

\subsubsection{Learning in MDPs} \label{ss:learning-in-MDPs}

In independent learning settings, pertinent information for policy selection is not available to the players. Player $i$ does not know the policy used by players $-i$, the value of its current policy against those of the other players, or whether its current policy is an $\epsilon$-best-response. We now review how Q-learning can be used to address these uncertainties. 

Markov decision processes (MDPs) can be viewed as a stochastic game with one player, i.e. $\vert \NN \vert = 1$. In standard Q-learning  \cite{Watkins89}, a single agent interacts with its MDP environment using some policy and maintains a vector of Q-factors, the $t^{th}$ iterate denoted $Q_t \in \rr^{\xx \times \uu}$. Upon selecting action $u_t$ at state $x_t$ and observing the subsequent state $x_{t+1} \sim P ( \cdot | x_t, u_t) $ and cost $c ( x_t, u_t)$, the Q-learning agent updates its Q-factors as follows:
\begin{equation} \label{eq:q-factors}
Q_{t+1} ( x_t, u_t ) = Q_t ( x_t, u_t ) + \theta_t (x_t, u_t ) \left[ c (x_t, u_t ) + \beta \min_{a \in \uu} Q_t ( x_{t+1} , a ) - Q_t ( x_t, u_t ) \right]
\end{equation}
where $\theta_t (x_t, u_t) \in [0,1]$ is a random step-size parameter and $Q_{t+1} ( s, a) = Q_t (s,a)$ for all $(s,a) \not= (x_t, u_t)$.\footnote{We are interested in the tabular, online variant, where access to the state, action, and cost feedback arrive piece-by-piece as the agent interacts with its environment.}

Under mild conditions, $Q_t \to Q^*$ almost surely as $t \to \infty$, where $Q^* \in \rr^{\xx \times \uu}$ is variously called the (state-)action value function or the Q-function \cite{WatkinsDayan92,tsitsiklis1994asynchronous}. The value $Q^*(s,a)$ represents the expected discounted cost-to-go from the initial state $s$, assuming that the agent initially chooses action $a$ and follows an optimal policy thereafter. The function $Q^{*}$ is given by 
\[
Q^{*}(s,a) = E^{\pi^{*}} \left[ \sum_{t = 0}^{\infty} \beta^i c ( x_t, u_t ) \middle| x_0 = s, u_0 = a \right] \quad \forall (s,a) \in \xx \times \uu, 
\]
where $\pi^{*}$ is any optimal policy. The vector $Q^*$ can then be used to construct any optimal policy $\tilde{\pi}^*$ in a state-by-state manner by setting
\[
\tilde{\pi}^* ( a^{*} | x) = 1, \text{ where } a^{*} \in \left\{ u \in \uu : Q^* ( x, u ) = \min_{a \in \uu} Q^* (x, a) \right\} \quad \forall x \in \xx. 
\]

\subsubsection{Learning in Stochastic Games} \label{ss:learning-in-games}

In the single-agent literature, the MDP is fixed and the $Q^*$ notation is used, but one could also introduce notation to identify the MDP. Returning to the game setting, if all agents except $i$ follow a stationary policy $\bpi^{-i} \in \bGamma^{-i}_{S}$, agent $i$ faces an environment that is equivalent to an MDP that depends on $\bpi^{-i}$. We denote agent $i$'s $t^{th}$ Q-factor iterate by $Q^i_t$ and $i$'s optimal Q-factors when playing against $\bpi^{-i}$ by $Q^{*i}_{\bpi^{-i}} \in \rr^{\xx \times \uu^i}$. With this notation, $Q^{*i}_{\bpi^{-i}}(x,u^i)$ represents agent $i$'s expected discounted cost-to-go from the initial state $x$ assuming that agent $i$ initially chooses $u^i$ and uses an optimal policy thereafter while the other agents use $\bpi^{-i}$, a fixed stationary policy. We note that an optimal policy for $i$ is guaranteed to exist since $i$ faces a finite, discounted MDP, and that any optimal policy for $i$ in this MDP is a $0$-best-response to $\bpi^{-i}$ in the underlying game $\GG$. More generally, we have the following fact: for any $i \in \NN$, $\bpi^{-i} \in \bGamma^{-i}_{S}$,
\[
\pi^i \in \BR^i_{\epsilon} ( \bpi^{-i}, \Gamma^i_{S} ) \iff J^i ( \pi^i , \bpi^{-i} ,  x ) \leq \min_{a^i \in \uu^i } Q^{*i}_{\bpi^{-i}} ( x, a^i ) + \epsilon , \quad \forall x \in \xx. 
\]

\subsection{Continuity of Value Functions and Quantized Policies}

We now present some useful results on the continuity of the various value functions introduced above. We begin by metrizing the policy sets $\bGamma_{S}$ and $\Gamma^i_{S}$ for each player $i \in \NN$. For $i \in \NN$, we define a metric $d^i$ on $\Gamma^i_{S}$ by setting 
\[
d^i ( \pi^i, \tilde{\pi}^i ) := \max \left\{ \left| \pi^i ( a^i | s ) - \tilde{\pi}^i ( a^i | s ) \right| : s \in \xx, a^i \in \uu^i \right\}, \quad \forall \pi^i, \tilde{\pi}^i \in \Gamma^i_{S}. 
\]

We then define the metric $\bd$ on $\bGamma_{S}$ by $\bd ( \bpi, \tilde{\bpi} ) := \max_{i \in \NN} d^i ( \pi^i, \tilde{\pi}^i ).$ The sets of policies $\{ \Gamma^i_{S} \}_{i \in \NN}$ are then compact in the topology induced by the metrics $\{ d^i \}_{i \in \NN}$, and similarly $\bGamma_{S}$ is compact in the topology induced by $\bd$.

\begin{lemma} \label{lemma:continuous-cost} 
For any player $i \in \NN$ and state $s \in \xx$, the function $\varphi^{i}_{s} : \bGamma_{S} \to \rr$ given by
\[
\varphi^{i}_{s} ( \bpi ) = J^i ( \bpi , s ) \quad \forall \bpi \in \bGamma_{S}
\]
is continuous. 
\end{lemma}

\begin{lemma} \label{lemma:continuous-Q-factors}
For any player $i \in \NN$ and state-action $(s, a^i)\in \xx\times\uu^i$, the mapping $\phi^{i}_{(s,a^i)} : \bGamma_{S}^{-i} \to \rr$ given by 
\[
\phi^{i}_{(s,a^i)} ( \bpi^{-i} ) = Q^{*i}_{\bpi^{-i}} ( s, a^i ), \quad \forall \bpi^{-i} \in \bGamma^{-i}_{S}
\]
is continuous.

\end{lemma}

\begin{lemma} \label{lemma:continuous-min-action}
For any player $i \in \NN$ and state $s \in \xx$, we have that the mapping $f^i_s : \bGamma_{S}^{-i} \to \rr$ given by 
\[ f^i_s ( \bpi^{-i} ) = \min_{a^i \in \uu^i } Q^{*i}_{\bpi^{-i}} ( s, a^i ) , \quad \forall \bpi^{-i} \in \bGamma_{S}^{-i}
\] is continuous. 
\end{lemma}

\begin{lemma} \label{lemma:continuous-max-state}
For any player $i \in \NN$ and fixed policy $\bpi^{-i} \in \bGamma_{S}^{-i}$, we have that the mapping $g^i_{\bpi^{-i}} : \Gamma_{S}^i \to \rr$ given by 
\[
 g^i_{\bpi^{-i}} (\pi^i ) = \max_{ s \in \xx } \left( J^i (\pi^i, \bpi^{-i} ,  s ) - \min_{a^i \in \uu^i} Q^{*i}_{\bpi^{-i}} (s, a^i)		\right) , \quad \forall \pi^i \in \Gamma^{i}_{S}
\]
is continuous. 
\end{lemma}

The proofs of Lemmas~\ref{lemma:continuous-cost}--\ref{lemma:continuous-max-state} can be found in Appendix~\ref{appendix:proofs-of-continuity-lemmas}.

\subsubsection{Quantizing Policy Sets} For algorithm design purposes, we now consider the effects of restricting each player to select its policy from a finite subset of its stationary policies. The finite subset of policies will be obtained be a fine quantization of the set of all stationary policies. Due to the preceding results on continuity of the many value functions, if this quantization is sufficiently fine, then restriction to this finite subset of policies entails only a small loss in performance.

\begin{definition}
Let $\xi > 0$, and $i \in \NN$. A mapping $q^i : \Gamma^i_{S} \to \Gamma^{i}_{S}$ is called a \emph{$\xi$-quantizer} if 
\begin{itemize}
	\item[(i)] the set $q^i ( \Gamma^i_{S} ) = \{ q^i ( \pi^i ) : \pi^i \in \Gamma^{i}_{S} \}$ is finite, and 
	\item[(ii)] For all $\pi^i \in \Gamma^i_{S}$, we have that $d^i ( \pi^i, q^i ( \pi^i )) < \xi$.  
\end{itemize}
\end{definition}

\by{We remark that if $0 < \xi_1 < \xi_2$, then any $\xi_1$-quantizer is automatically a $\xi_2$-quantizer, by part (ii) in the preceding definition.}

\begin{definition}
Let $i \in \NN$ and $\xi > 0$. A subset $\Pi^i \subseteq \Gamma^i_{S}$ is called a \emph{$\xi$-quantization} of $\Gamma^i_{S}$ if $\Pi^i = q^i ( \Gamma^i_{S} )$ for some $\xi$-quantizer $q^i$.
\end{definition}

We extend this terminology to also refer to subsets $\bPi \subseteq \bGamma_{S}$ as $\xi$-quantizations of $\bGamma_{S}$ when, for each component $i \in \NN$, $\Pi^i$ is a $\xi$-quantization of $\Gamma^i_{S}$. We note that for any $\xi > 0$ and $i \in \NN$, by the compactness of $\Gamma^i_{S}$, there always exists some $\xi$-quantization $\Pi^i$ of $\Gamma^i_{S}$ such that all policies $\pi^i \in \Pi^i$ are soft.

Since the sets $\{ \Gamma^i_{S} \}_{i \in \NN}$ and $\bGamma_{S}$ are compact, it follows from Lemma~\ref{lemma:continuous-cost} that the cost functionals are uniformly continuous on $\bGamma_{S}$. That is, for any $\delta > 0$, there exists $\xi = \xi ( \delta )$ such that for any $i \in \NN$, $x \in \xx$, and joint policies $\bpi, \tilde{\bpi} \in \bGamma_{S}$, if $\bd ( \bpi, \tilde{ \bpi } ) < \xi$, then we have $ \left| J^i ( \bpi, x) - J^i ( \tilde{\bpi} , x) 	\right| < \delta .$

\by{For $\epsilon > 0$,} a quantization $\bPi$ of $\bGamma_{S}$ into bins of radius less than \by{$\xi ( \frac{\epsilon}{3})$} has the desirable property that the quantization $\Pi^i$ always has contains an $\by{\frac{\epsilon}{3}}$-best-response to any policy $\bpi^{-i} \in \bGamma^{-i}_{S}$ of the remaining players. That is, $\Pi^i \cap \BR^i_{\by{\epsilon/3}} ( \bpi^{-i} , \Gamma^i_{S} ) \not= \varnothing$. Moreover, we are guaranteed at least one $\epsilon$-equilibrium in the quantization $\bPi$.

\begin{corollary} \label{corollary:quantized-equilibrium}
For any $\epsilon > 0$, there exists $\xi = \xi ( \epsilon, \GG ) > 0$ such that if $\bPi$ is a $\xi$-quantization of $\bGamma_{S}$, then we have $\bPi \cap \eq[\epsilon]_{S} \not= \varnothing$. 
\end{corollary}

\by{We note that, by a previous remark, Corollary~\ref{corollary:quantized-equilibrium} holds for any $\xi^{\prime}$-quantization of $\bGamma_{S}$, where $\xi^{\prime} < \xi( \epsilon , \GG )$. Furthermore, such a quantization can always be selected so as to only contain soft policies.}

\section{Policy Dynamics and $\epsilon$-Satisficing} \label{sec:epsilon-paths}

This section studies discrete-time dynamical systems on the set of stationary joint policies $\bGamma_{S}$. In particular, we focus on those dynamical systems whose trajectories are obtained by a \textit{policy revision process}, in which each agent changes its policy according some update rule. While we do not restrict ourselves to studying systems in which all agents use the same update rule, we do focus on rules of the so-called $\epsilon$-satisficing variety, to be defined shortly. Broadly speaking, \by{we} wish to address the following question: when can it be guaranteed that \textit{some} $\epsilon$-satisficing policy revision process drives play to $\epsilon$-equilibrium irrespective of the initial joint policy? We present positive results for $N$ player symmetric games and general two player games.

For the following definitions, we let $\GG$ be a stochastic game given by \eqref{def:stochasticGame}.

\begin{definition}
For player $i \in \NN$, a function $T^i : \bGamma_{S} \to \Gamma^i_{S}$ is called a \textit{policy update rule for player $i$}. 
\end{definition}

Given a collection $\bT = \{ T^i : i \in \NN \}$ of policy update rules for each player, we will study the discrete-time orbits of $\bT$. For $\bpi \in \bGamma_{S}$, we define $\bT ( \bpi)$ to be the stationary joint policy where $i$'s component is given by $T^i (\bpi)$. That is, $\bT ( \bpi ) := ( T^i ( \bpi ) )_{i \in \NN}$. Furthermore, for all $\bpi \in \bGamma_{S}$, we put $\bT^{0} ( \bpi ) = \bpi$ and 
\[
\bT^{k+1} ( \bpi ) = \bT ( \bT^k ( \bpi )) , \quad \forall k \geq 0. 
\]

\begin{definition}
Let \emph{$\bT = \{ T^i : i \in \NN \}$} be a collection of policy update rules for each player, and let $\tt : \bGamma_{S} \to \{  ( \bpi_k )_{k \geq 0} : \bpi_k \in \bGamma_{S} \text{ for all } k \geq 0 \}$ be a mapping from joint policies to sequences of joint policies. 

If $\tt ( \bpi ) = (    \emph{\bT}^k ( \bpi ) )_{k \geq 0}$ for every $\bpi \in \bGamma_{S}$, we say that $\tt$ is the \emph{policy revision process associated to $\bT$.}
\end{definition}

Policy revision processes arising from specific update rules have received considerable attention in the past, both in discrete time and continuous time settings. Special interest has been given to best-response dynamics and its variants (see, for instance, \cite{leslie2020best,matsui1992best,roughgarden2010algorithmic}), replicator dynamics \cite{gaunersdorfer1995fictitious,hofbauer2003evolutionary}, and the relationship between these dynamics and game theoretic learning. 

Relatively fewer works focus on entire classes of policy revision processes. Two pioneering works in this tradition are \cite{hart2003uncoupled} and \cite{hart2006stochastic}. In \cite{hart2003uncoupled}, the authors consider continuous time policy revision processes in normal form games and give a non-existence result: if the policy update rules satisfy certain regularity properties as well as a condition called uncoupledness, then the associated policy revision process may not converge to Nash equilibrium. In \cite{hart2006stochastic}, the authors consider discrete time stochastic policy revision processes arising from uncoupled dynamics and provide a positive result, contingent on the policy update rules also allowing for use of memory.

Like \cite{hart2003uncoupled} and \cite{hart2006stochastic}, we will study a class of policy revision processes, rather than focusing on a revision process associated to a particular update rule. Instead of focusing on uncoupled dynamics, however, we will study policy update rules that instruct an agent to keep using its current policy whenever that policy is an $\epsilon$-best-response to the prevailing joint policy. Such update rules, which we formalize as $\epsilon$-satisficing update rules below, have the desirable property that $\epsilon$-equilibrium policies are stable points for the associated policy revision processes.

\begin{definition}
Let $\epsilon \geq 0$ and let $T^i$ be a policy update rule for player $i \in \NN$. $T^i$ is said to be \emph{$\epsilon$-satisficing} if, for any $( \pi^i, \bpi^{-i} ) \in \bGamma_{S}$, we have that $\pi^i \in \BR^i_{\epsilon} ( \pi^i, \bpi^{-i} )$ implies $T^i ( \bpi ) = \pi^i.$
\end{definition}

Our chosen terminology is inspired by \cite{simon1956rational}, where ``satisficing'' refers to becoming satisfied and halting search when a sufficiently good input has been found in an optimization problem. Satisficing has a long history in both single-agent decision theory (e.g. \cite{radner1975satisficing}, \cite{cassidy1972solution}) and also multi-agent game theory (e.g. \cite{charnes1963deterministic}). Recently, there has been renewed interest in studying dynamics arising from particular $\epsilon$-satisficing policy update rules; for example, see \cite[Section 5]{candogan2013near} and \cite{chien2011convergence}.

A policy revision process $\tt$ is called $\epsilon$-satisficing if it is associated to a collection of policy update rules $\bT = \{ T^i \}_{i \in \NN}$ such that $T^i$ is $\epsilon$-satisficing for each player $i \in \NN$. It is natural to ask the following: what assumptions must be made on a game in order to guarantee that \textit{some} $\epsilon$-satisficing revision process can drive the joint policy process to $\eq[\epsilon]_{S}$ irrespective of the initial policy? With this question in mind, we state the following definitions.

\begin{definition} \label{def:epsilon-path}
Let $\epsilon \geq 0$. A (possibly finite) sequence $( \bpi )_{k \geq 0}$ of stationary joint policies is called an \emph{$\epsilon$-satisficing path} if, for every $k \geq 0$ and $i \in \NN$, $\pi^i_k \in \BR^i_{\epsilon} ( \bpi^{-i}_k )$ implies $\pi^i_{k+1} = \pi^i_k$.
\end{definition}

\begin{definition} \label{def:paths-property}
Let $\epsilon \geq 0$ and let $\bPi \subseteq \bGamma_{S}$ be a subset of stationary joint policies. A game $\GG$ is said to have the $\epsilon$-satisficing paths property within $\bPi$ if for every $\bpi \in \bPi$, there exists an $\epsilon$-satisficing path $( \bpi_t )_{t \geq 0}$ and an integer $K = K(\bpi)$, such that (i)~$\bpi_0=\bpi$, (ii) $\bpi_t \in \bPi$ for every $t \geq 0$, and (iii) $\bpi_K \in \eq[\epsilon]_{S}$.  
\end{definition}

We note that Definitions~\ref{def:epsilon-path} and \ref{def:paths-property} are not attached to any particular policy revision process or collection of policy update rules. In particular, we note that Definition~\ref{def:epsilon-path} does not require that a player must switch to a best-response when not already $\epsilon$-best-responding. Consequently, one may interpret the $\epsilon$-satisficing paths property as a necessary condition for convergence to $\eq[\epsilon]_S$ when players use arbitrary $\epsilon$-satisficing update rules. It is therefore useful to establish whether this property holds (or fails to hold): in games where the $\epsilon$-satisficing paths property does not hold within $\bGamma_{S}$, the use of $\epsilon$-satisficing policy update rules may be inappropriate, as there exist initial joint policies from which $\eq[\epsilon]_{S}$ cannot be reached in finite time by following an $\epsilon$-satisficing path.

\subsection{Satisficing Paths in $N$-Player Symmetric Games}
\begin{theorem} \label{thm:symmetric-paths} 
	Let $\GG$ be a symmetric stochastic game given by \eqref{def:stochasticGame}. Then $\GG$ has the $\epsilon$-satisficing paths property in $\bGamma_{S}$ for all $\epsilon \geq 0$. 
\end{theorem}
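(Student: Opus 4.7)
The plan is to construct, for any $\bgamma_0 \in \bDelta$, an explicit $\epsilon$-revision path terminating at a \emph{symmetric} $\epsilon$-equilibrium. As a preliminary, I would verify that any symmetric discounted stochastic game admits $\gamma^\star \in \Delta^1$ with $(\gamma^\star, \ldots, \gamma^\star) \in \bDelta^{0}_{\rm eq} \subseteq \bDelta^{\epsilon}_{\rm eq}$. This follows from Fink's fixed-point argument \cite{fink1964equilibrium} restricted to the compact convex set of symmetric profiles $\{(\gamma, \ldots, \gamma) : \gamma \in \Delta^1\}$; Lemma \ref{lemma:symmetric-equality} ensures that the $0$-best-response correspondence preserves symmetry on this subset, so Kakutani's theorem applies on the restricted domain and yields a symmetric fixed point.

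Fixing such $\gamma^\star$ as the target, I would construct the path iteratively using the potential $f(\bgamma) := |\{i \in \NN : \gamma^i \neq \gamma^\star\}|$. At each $\bgamma_k$, if $\bgamma_k \in \bDelta^{\epsilon}_{\rm eq}$, stop. Otherwise I attempt a \emph{direct move}: if some non-$\epsilon$-best-responder $i$ at $\bgamma_k$ satisfies $\gamma^i_k \neq \gamma^\star$, set $\gamma^{i}_{k+1} := \gamma^\star$ and leave all other agents unchanged. This is a valid revision step (since all $\epsilon$-best-responders stay put by construction, and the remaining non-best-responders are free to stay), and $f$ strictly decreases by one.

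The real work lies in the \emph{stuck case}: $\bgamma_k \notin \bDelta^{\epsilon}_{\rm eq}$ but every non-$\epsilon$-best-responder at $\bgamma_k$ is already playing $\gamma^\star$. By Lemma \ref{lemma:symmetric-equality}, every $\gamma^\star$-player is then non-best-responding while every non-$\gamma^\star$-player is $\epsilon$-best-responding, and both subsets are nonempty. To escape, I propose a two-step detour: pick any $\gamma^\star$-player $i$ and any non-$\gamma^\star$-player $j$, and in one valid move shift $i$'s policy to $\gamma^j_k$ while holding all other agents fixed. At the resulting $\bgamma_{k+1}$, agents $i$ and $j$ share the policy $\gamma^j_k$, so Lemma \ref{lemma:symmetric-equality} equates their $\epsilon$-best-response statuses. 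In the favourable sub-case where this shared status is non-best-responding, the next step can simultaneously move both $i$ and $j$ to $\gamma^\star$, and the composite two-step move gives a net reduction of $f$ by one.

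The main obstacle I expect is the residual sub-case in which $i$ and $j$ are jointly $\epsilon$-best-responding at $\bgamma_{k+1}$: the detour has momentarily raised $f$ without an obvious immediate recovery, so a naive monotone-potential argument fails. I would address this either by iterating the detour through different choices of the non-$\gamma^\star$-player $j$ (each such iteration further perturbing the environment of the remaining locked agents), or by passing to a compound lexicographic potential---for instance $(f(\bgamma), h(\bgamma))$ with $h$ counting the number of distinct policies currently ``locked in'' by $\epsilon$-best-responders---and showing strict descent of this lex-potential under every composite move. Termination in $\bDelta^{\epsilon}_{\rm eq}$ would then follow from the finiteness of the lex range. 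Verifying strict descent uniformly across all realizations of the detour is the principal technical challenge of the argument.
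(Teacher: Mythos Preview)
Your approach differs from the paper's in a material way, and the difference is exactly where your acknowledged gap lies. The paper does not fix a target policy $\gamma^\star$ in advance. Instead, it tracks a growing \emph{cohort} $S_k$ of agents currently using a common (but otherwise arbitrary) policy and shows $|S_{k+1}| \geq |S_k| + 1$ at every step: if the cohort members are $\epsilon$-best-responding, some unsatisfied outsider is absorbed into the cohort; if the cohort members are not $\epsilon$-best-responding, the \emph{entire} cohort simultaneously adopts an outsider's policy, which again enlarges the cohort. Once $|S_k| = N$, either the profile is an $\epsilon$-equilibrium or, by Lemma~\ref{lemma:symmetric-equality}, no one is $\epsilon$-best-responding and one can jump in a single step to any element of $\eq^{\epsilon}$. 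The monotone quantity is cohort size, not distance to a fixed target; the argument terminates in at most $N+1$ steps and needs no preliminary about symmetric equilibria.

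Your ``stuck sub-case~2'' is a genuine gap, and the fixes you sketch do not close it. When you move a single $\gamma^\star$-player $i$ to $\gamma^j_k$ and both $i,j$ turn out to be $\epsilon$-best-responding at $\bgamma_{k+1}$, you have no control over the best-response statuses of the remaining players, since each of their environments changed with $i$'s move. Iterating the detour with a different $j$ just reproduces the same dichotomy one level deeper. The lexicographic potential $(f,h)$ does not strictly descend either: the detour raises $f$ by one, while $h$ (the number of distinct locked policies) need not drop, since $i$'s new policy $\gamma^j_k$ was already locked at $\bgamma_k$ and the lock status of every other policy can flip in either direction. Because only finitely many policies are ever in play, the process lives on a finite set, but nothing in your scheme rules out cycling among configurations with the same $(f,h)$.

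The natural repair is to move \emph{all} unsatisfied $\gamma^\star$-players simultaneously to the outsider's policy in the stuck case---but once you do that you are measuring progress by cohort size, and you must be willing to abandon $\gamma^\star$ as a target whenever the cohort migrates away from it. That is precisely the paper's argument, and it sidesteps your residual sub-case entirely.
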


In the proof below, we construct an $\epsilon$-satisficing path of finite length from any initial policy into the set $\eq[\epsilon]_{S}$. Intuitively, beginning from an arbitrary policy, unsatisfied players (i.e. players not $\epsilon$-best-responding) can change policies to match the policy of some other (not necessarily satisfied) player. We create a cohort of players using the same policy and progressively grow the cohort---either by adding an unsatisfied player to the cohort by switching its policy, or by switching the policy of every member of the cohort to match that of some other player---until a stopping condition is met. We stop either because we have found an $\epsilon$-equilibrium or because no player is satisfied, which allows us to move in one step to an arbitrary $\epsilon$-equilibrium. 
\begin{proof}
Let $\bpi_0 \in \bGamma_{S}$ be an initial policy. We claim that there exists some $\epsilon$-satisficing path of finite length from $\bpi_0$ to $\eq[\epsilon]_{S}$. Put $C_{-1} = \varnothing$ and select a player $i(0) \in \NN$ arbitrarily. We define our first cohort, $C_0$ to be the subset of players whose policy matches that of player $i(0)$: $C_0 := \{ j \in \NN : \pi^j_0 = \pi^{i(0)}_0 \}.$
 
For some $n \geq 0$, suppose that we have a sequence of joint policies $\{ \bpi_k \}_{k = 0}^n$ and player subsets $\{ C_k \}_{k = 0}^n$ such that items (1)--(4) below hold for each $k \in \{0 , \dots, n \}$:
	\begin{itemize}
		\item[(1)] All players in $C_k$ use the same policy, i.e. $\pi^i_k = \pi^j_k$ for all $i, j \in C_k$;
		\item[(2)] $C_{k-1} \subset C_k$ and $| C_k | \geq | C_{k-1} | + 1$;
		\item[(3)] If player $i \in C_k$, $j \notin C_k$, then $\pi^j_k \not= \pi^i_k$;
		\item[(4)] $\bpi_0, \cdots, \bpi_k$ is an $\epsilon$-satisficing path. 
	\end{itemize}

\ \\
If $\bpi_n \in \eq[\epsilon]_{S}$, then $(\bpi_0 , \cdots, \bpi_n)$ is an $\epsilon$-satisficing path of finite length from $\bpi_0$ to $\eq[\epsilon]_S$. If $\bpi_n \notin \eq[\epsilon]_{S}$ and $C_n = \NN$, then by Corollary~\ref{corollary:symmetric-equality}, $\pi^i_n \notin \BR^i_{\epsilon} ( \bpi^{-i}_n)$ for all $i \in \NN$, and so all players may change their policies. It follows that for any $\bpi^{*} \in \eq[\epsilon]_{S}$, the sequence $(\bpi_0, \cdots, \bpi_n, \bpi^{*} )$ is an $\epsilon$-satisficing path of finite length from $\bpi_0$ to $\eq[\epsilon]_S$. %

We now focus on the final case: $\bpi_n \notin \eq[\epsilon]_{S}$ and $C_n \not= \NN$. In this case, the sequence $( \bpi_k , C_k )_{k = 0}^{n}$ is submaximal, in that there exists a policy $\bpi_{n+1}$ and a player subset $C_{n+1}$ such that the extended sequence $( \bpi_k , C_k )_{k = 0}^{n+1}$ satisfies (1)--(4) for each $k \in \{ 0, \dots, n+1 \}$. We produce such a policy $\bpi_{n+1}$ and player set $C_{n+1}$ now, proceeding in cases. 

Since $\bpi_{n} \notin \eq[\epsilon]_{S}$, there exists some player who is not $\epsilon$-best-responding at $\bpi_n$, i.e. there exists $i \in \NN$ such that $\pi^i_n \notin \BR^i_{\epsilon} ( \bpi^{-i}_n)$. We have two sub-cases to consider: by Corollary~\ref{corollary:symmetric-equality}, either (a) $\pi^j_n \in \BR^j_{\epsilon} ( \bpi^{-j}_n)$ for every $j \in C_n$, or (b)  $\pi^j_n \notin \BR^j_{\epsilon} ( \bpi^{-j}_n)$ for every $j \in C_n$.

In sub-case (a), all players in $C_n$ are $\epsilon$-best-responding. Select an unsatisfied player $j(n+1) \in \NN \setminus C_n$, and construct a successor policy $\bpi_{n+1}$ by putting $\pi^{j(n+1)}_{n+1} = \pi^{i^*}_n$, where $i^{*} \in C_n$ is any player in $C_n$, and put $\pi^i_{n+1} = \pi^i_n$ for all players $i \not= j(n+1)$. We then put $C_{n+1} = C_n \cup \{ j(n+1) \}$, and the sequence $\{ \bpi_k , C_k \}_{k = 0}^{n}$ is extended to $\{ \bpi_k , C_k \}_{k = 0}^{n+1}$ while preserving (1)--(4) for all $k \in \{0, \dots, n+1\}$.

In sub-case (b), all players in $C_n$ are allowed to switch their policy while preserving the $\epsilon$-satisficing property of the path. Select a player $h(n+1) \in \NN \setminus C_n$ and define the successor policy $\bpi_{n+1}$ as follows:
\begin{align*}
\pi^i_{n+1} =	\begin{cases}
				\pi^i_{n} &\text{ if } i \notin C_n \\ 
				\pi^{h (n+1)}_{n} &\text{ if } i \in  C_n.
			\end{cases} 
\end{align*}
We note that the player $h(n+1)$ may be selected arbitrarily from $\NN \setminus C_n$ and need not be $\epsilon$-best-responding to $\bpi^{-h(n+1)}_{n}$. Next, we define $C_{n+1} = C_n \cup \{ j \in \NN : \pi^j_n = \pi^{h( n+1)}_n \}$. Thus, the sequence $\{ \bpi_k , C_k \}_{k = 0}^{n}$ has been extended to $\{ \bpi_k , C_k \}_{k = 0}^{n+1}$ while preserving (1)--(4) for all $k \in \{1, \dots, n+1\}$. 

The preceding extension process---of obtaining policy $\bpi_{n+1}$ and cohort $C_{n+1}$ from $\bpi_n$ and $C_n$---can be repeated at most finitely many times before one of the two aforementioned stopping conditions  (namely, $C_{n+1} = \NN$ or $\bpi_{n+1} \in \eq[\epsilon]_{S}$) is met. If the stopping condition $\bpi_{n+1} \in \eq[\epsilon]_{S}$ is satisfied, we have produced an $\epsilon$-satisficing path of finite length from $\bpi_0$ into $\eq[\epsilon]_{S}$. Otherwise, the stopping condition $C_{n+1} = \NN$ is satisfied while $\bpi_{n+1} \notin \eq[\epsilon]_{S}$, in which case all players may switch policies and $( \bpi_0 , \cdots, \bpi_{n+1},  \bpi^{*} )$ is an $\epsilon$-satisficing into $\eq[\epsilon]_{S}$ for any $\bpi^{*} \in \eq[\epsilon]$. 
\end{proof}

\ \\
In fact, the argument in the proof of Theorem \ref{thm:symmetric-paths} can be used, without modification, to prove the following result, in which the policy set is restricted.

\begin{theorem} \label{thm:generalized-symmetric-paths}
Let $\GG$ be a symmetric game given by \eqref{def:stochasticGame} and let $\epsilon \geq 0$. Let $\bPi \subseteq \bGamma_{S}$ be a subset of stationary joint policies satisfying $\Pi^i = \Pi^j$ for all $i, j \in \NN$ and suppose $\bPi \cap \eq[\epsilon]_{S} \not= \varnothing$. Then, $\GG$ has the $\epsilon$-satisficing paths property within $\bPi$. 
\end{theorem}

\by{Although this result may initially appear to be only a modest generalization of Theorem~\ref{thm:symmetric-paths}, we will see in the next section that it has important consequences for algorithm design. In particular, we will see that the existence of $\epsilon$-satisficing paths within a finite subset of policies is a sufficient condition for the convergence of some learning processes in symmetric games.}

\subsection{Satisficing paths in General Two-Player Games}

In this subsection, we state and prove our second structural result, which is that general two-player stochastic games have the $\epsilon$-satisficing paths property for any $\epsilon > 0$. This result assumes no symmetry in the game, and therefore requires a rather different proof technique than the one used for Theorem~\ref{thm:symmetric-paths}. The proof used here is non-constructive and relies on the continuity properties of various value functions.

\begin{theorem}  \label{thm:two-player-paths}
Let $\GG$ be a stochastic game given by \eqref{def:stochasticGame} with $| \NN | = 2$. Then, $\GG$ has the $\epsilon$-satisficing paths property within $\bGamma_{S}$ for any $\epsilon > 0$.
\end{theorem}

\begin{proof}
Fix $\bpi_0 \in \bGamma_{S}$, and let $\sat_{\epsilon} ( \bpi_0) := \{ i \in \NN : \pi^i_0 \in \BR^i_{\epsilon} ( \bpi^{-i}_0 ) \}$. We will argue that there exists an $\epsilon$-satisficing path $(\bpi_0, \bpi_1, \bpi_2)$ such that $\bpi_2 \in \eq[\epsilon]_{S}$.

We proceed in three cases: either (1) $\left| \sat_{\epsilon} (\bpi_0) \right| = 0$, (2) $\left| \sat_{\epsilon} (\bpi_0) \right| = 1$, or (3) $\left| \sat_{\epsilon} (\bpi_0) \right| $ = 2. Cases (1) and (3) are straightforward---in Case (1) select any $\bpi^{*} \in \eq[\epsilon]_{S}$ and put $\bpi_1 = \bpi_2 = \bpi^{*}$; in Case (3), put $\bpi_0 = \bpi_1 = \bpi_2$. Then, in either case we have that $(\bpi_0, \bpi_1, \bpi_2)$  is an $\epsilon$-satisficing path that $\bpi_2 \in \eq[\epsilon]_{S}$.

We focus now on Case (2), where exactly one player is $\epsilon$-best-responding initially. Let $i \in \sat_{\epsilon} ( \bpi_0)$ denote the player who is $\epsilon$-best-responding, and let $j \in \NN \setminus \sat_{\epsilon} ( \bpi_0 )$ denote the player who is not $\epsilon$-best-responding at $\bpi_0$. There are two sub-cases to consider: 
\begin{itemize}
	\item[(a)] There exists $\pi^j \in \Gamma^j_{S} \setminus \BR^j_{\epsilon} ( \pi^i_0 )$ such that $\pi^i_0 \notin \BR^i_{\epsilon} ( \pi^j )$;
	\item[(b)] For all $\pi^j \in \Gamma^j_{S} \setminus \BR^j_{\epsilon} ( \pi^i_0)$, we have that $\pi^i_0 \in \BR^i_{\epsilon} (\pi^j)$. 
\end{itemize}

\ \\ 
In case (a), suppose $\pi^j_1 \in \Gamma^j_{S} \setminus \BR^j_{\epsilon} ( \pi^i_0 )$ is such that $\pi^i_0 \notin \BR^i_{\epsilon} ( \pi^j_1 )$, and put $\bpi_1 = (\pi^i_0, \pi^j_1)$. Then, $(\bpi_0, \bpi_1)$ is an $\epsilon$-satisficing path, since only $j$ changed its policy from $\bpi_0$ to $\bpi_1$. By the condition defining case (a), neither player is $\epsilon$-best-responding at $\bpi_1$, and so both may change policies. Thus, for any $\bpi^{*} \in \eq[\epsilon]_{S}$, we have that $( \bpi_0, \bpi_1, \bpi^{*} )$ is an $\epsilon$-satisficing path into $\eq[\epsilon]_{S}$.

\ \\
For case (b), we will argue that the premise implies that there exists some $\pi^{*j}_1 \in \BR^j_{\epsilon} ( \pi^i_0)$ such that $\bpi_1 = ( \pi^i_0, \pi^{*j}_1 ) \in \eq[\epsilon]_{S}$. 

Note that for any $\pi^j \in \Gamma^j_{S}$, we have that $\pi^j \in \BR^j_{\epsilon} ( \pi^i_0 )$ if and only if 
\[
\max_{s \in \xx } \left(  J^j  (\pi^j, \pi^i_0 , s ) - \min_{a^j \in \uu^j } Q^{*j}_{\pi^i_0} ( s, a^j )		\right) \leq \epsilon ,
\]
and an analogous condition characterizes $\epsilon$-best-responding for player $i$. 

We will use this formulation of $\epsilon$-best-responding to construct a continuous function $\Phi : [0,1] \to \rr$. Fix some 0-best-response $\tilde{\pi}^j \in \BR^j_{0} ( \pi^i_0 )$. For each $\lambda \in [0,1]$, we define a policy $\pi^j_{(\lambda)} \in \Gamma^j_{S}$ as
\[
\pi^j_{(\lambda)} ( \cdot | x ) = (1-\lambda) \pi^j_0 ( \cdot | x ) + \lambda  \tilde{\pi}^j ( \cdot | x ) , \quad \forall x \in \xx. 
\]
We define $\Phi : [0,1] \to \rr$ by 
\[
\Phi ( \lambda ) := \max_{s \in \xx } \left(   J^j \left( \pi^j_{(\lambda)} , \pi^i_0 , s \right) - \min_{a^j \in \uu^j } Q^{*j}_{\pi^i_0} ( s, a^j ) \right ), \quad \forall \lambda \in [0,1]. 
\]
Note that $0 = \Phi(1) < \epsilon$, since $\pi^j_{(1)} = \tilde{\pi}^j \in \BR^j_{0} ( \pi^i_0)$, and that $\epsilon < \Phi ( 0 )$, since $\pi^j_{(0)} = \pi^j_0 \notin \BR^j_{\epsilon} (\pi^i_0)$. As it is the composition of continuous functions (Lemmas~\ref{lemma:continuous-cost}--\ref{lemma:continuous-max-state}), we have that $\Phi$ is continuous. By the intermediate value theorem, there exists $\lambda \in (0,1)$ such that $\Phi ( \lambda ) = \epsilon$. We put $\lambda^{*} = \inf \{ \lambda \in (0,1) : \Phi ( \lambda ) = \epsilon \} > 0$. It follows that for any $\lambda < \lambda^{*}$, we have $\pi^j_{ ( \lambda ) } \notin \BR^j_{\epsilon}  ( \pi^i_0)$, otherwise the minimality of $\lambda^{*}$ is contradicted. 

We take an increasing sequence $\{ \lambda_n \}_{n \geq 0}$ such that $\lambda_n \uparrow \lambda^{*}$, and we define the policies $\gamma^j_n := \pi^j_{( \lambda_n )}$ for each $n \geq 0$. We have that $\gamma^j_n \to \pi^j_{ ( \lambda^{*} ) }$, and furthermore $\gamma^j_n \in \Gamma^j_{S} \setminus \BR^j_{\epsilon} (\pi^i_0)$ for each $n \geq 0$, while $\pi^j_{ (\lambda^{*} )} \in \BR^j_{\epsilon} ( \pi^i_0)$. 

By the condition defining case (b), we have that $\pi^i_0 \in \BR^i_{\epsilon}  ( \gamma^j_n )$ for all $n \geq 0$. Equivalently,
\[
\max_{ s \in \xx } \left( 	J^i \left( \pi^i_0 , \gamma^j_n , s \right) - \min_{ a^i \in \uu^i } Q^{*i}_{\gamma^j_n } ( s, a^i ) \right) \leq \epsilon \quad \forall n \geq 0.
\]
By continuity, this holds taking the limit as $n \to \infty$, and so $\pi^i_0 \in \BR^i_{\epsilon} ( \pi^j_{ ( \lambda^* ) } )$. Thus, in case (b), we may put $\bpi_1 = \bpi_2 = ( \pi^i_0, \pi^j_{ (\lambda^{*} ) } ) \in \eq[\epsilon]_{S}$, which completes the proof. 
\end{proof}

\by{
\subsection*{Remarks}

We now offer some intuition about the argument used to prove Theorem~\ref{thm:two-player-paths}, and discuss difficulties that arise when one attempts to generalize this proof method to $N$-player games, with $N \geq 3$, or to restricted policy subsets $\bPi \subset \bGamma_{S}$. 

Cases (1) and (3) in the preceding proof are intuitively simple. In case (1), neither agent is presently $\epsilon$-best-responding and therefore both agents may switch their policies to any successor. In case (3), both agents are presently $\epsilon$-best-responding and the existence of a path to $\epsilon$-equilibrium is trivial. This leaves case (2), where exactly one agent is $\epsilon$-best-responding, as the only remaining case.

In intuitive terms, we analyze case (2) by asking whether the unsatisfied player (player $j$ in the proof above) can destabilize the other player \textit{without making itself satisfied.} The ability to induce mutual dissatisfaction is the defining property of case (2a), and leads to a very simple analysis: if the $\epsilon$-unsatisfied player can make both players unsatisfied, then both are free to switch policies in the next period. The remaining sub-case, case (2b), is simply the logical negation of case (2a). This final case involves a satisfied player, $i$, whose satisfaction cannot be destabilized by the unsatisfied player $j$ as long as $j$ remains unsatisfied. This satisfied disposition allows the unsatisfied player to approach a best-response without unsettling the already satisfied player.

Unfortunately, generalizing this proof technique to games with more than two players is rather challenging. When considering an $N$-player game with $N > 2$, in addition to the trivial cases---where no players are $\epsilon$-satisfied and where all players are $\epsilon$-satisfied---there are $N-1$ middling cases, where there are exactly $k$ satisfied players, with $1 \leq k \leq N-1$.  When $k > 1$, the condition analogous to case (2a) remains easy to analyze, but its logical negation fails to be useful. In the sub-case analogous to case (2b), 
the set of $k$ initially satisfied players is not a monolith: changing policies may leave some of the formerly satisfied players satisfied while making others unsatisfied. As a result, the technique of taking limits and relying on continuity properties of the value functions may not yield the desired result. 

As another matter, the proof technique used to prove Theorem~\ref{thm:two-player-paths} does not readily apply to (finite) policy subsets. That is, this proof technique does not immediately lead to a generalization of Theorem~\ref{thm:two-player-paths} in the way that Theorem~\ref{thm:generalized-symmetric-paths} generalized Theorem~\ref{thm:symmetric-paths}. As we will see in the coming sections, a consequence is that algorithm design for general two-player stochastic games is rather more involved than in symmetric games with $N$-players. 
}

\section{Exploiting the $\epsilon$-Satisficing Paths Property} \label{sec:oracle}

In this section and the next, we demonstrate how one can design independent learners that exploit the $\epsilon$-satisficing paths property of symmetric games. We begin, in this section, by developing intuition in the simplified setting where learning is black-boxed and players update their policies using an $\epsilon$-satisficing rule that incorporates random search when not $\epsilon$-best-responding. A complete independent learning algorithm suitable for online learning in symmetric games is then presented in the next section and analyzed as a two-timescale, noisy implementation of the black-boxed process.

\subsection{Policy revision with oracle} 

In Algorithm~\ref{algo:oracle}, we present an procedure in which players randomly revise their policies in discrete time, resulting in a time homogenous Markov chain on $\bGamma_{S}$. There is no learning in this idealized process: at each time step, player $i \in \NN$ receives the relevant state value and action value information from an oracle, and uses this information to select its successor policy. We assume that the policy updates are jointly independent across agents, conditional on the information given by the oracle.  \by{That is, we do not assume shared randomness.}

The relevant parameters and objects used in Algorithm~\ref{algo:oracle} are the following: 
\begin{itemize}
	\item $\Pi^i \subset \Gamma^i_{S}$: a finite subset of policies from which $i$ selects its policy. $\Pi^i$ will be taken to be a fine quantization of $\Gamma^i_{S}$; 
	
	\item UpdateRule$^i \in \PP ( \Pi^i | \Pi^i \times \rr^{\xx \times \uu^i } \times \rr^{\xx} )$ is a stochastic kernel that is used to select a (candidate) successor policy. The distribution over $\Pi^i$ will depend on the current policy of player $i$, its Q-factors (learned or received from an oracle), and its value function estimates (learned or received from an oracle); %

	\item $d^i \geq 0$ is a tolerance for sub-optimality. This is included to account for learning error in the next section; in this section, since there is no learning error, we take $d^i = 0$. 
	\item An experimentation probability $e^i \in ( 0, 1)$: when a player is not $\epsilon$-best-responding, it selects its successor policy according to a mixture distribution, with mixture parts UpdateRule$^i$ and the uniform distribution on $\Pi^i$;
\end{itemize}

\begin{algorithm2e}[h]  
	\SetAlgoLined
	\DontPrintSemicolon
	\SetKw{Receive}{Receive}
	\SetKw{parameters}{Set Parameters}
	\SetKw{initialize}{Initialize}
	\SetKwBlock{For}{for}{end} 
	
	\parameters    \;
	\Indp 
	$\Pi^i \subset \Gamma^i_{S}$: a fine quantization of $\Gamma^i_{S}$ \; 
	UpdateRule$^i \in \PP ( \Pi^i | \Pi^i \times \rr^{\xx \times \uu^i } \times \rr^{\xx} )$: (described above) \;
	$e^i \in (0,1)$: experimentation probability when not $\epsilon$-best-responding \;
	$d^i = 0$: tolerance for sub-optimality \; 
	\Indm 
	\BlankLine
	
	\initialize 	$\pi_0^i \in \Pi^i$: initial policy \;
	
	\BlankLine
	
	\For($k \geq 0$ ($k^{th}$ policy update{)}){ 
		\BlankLine 
		\Receive $Q^{*i}_{\bpi^{-i}_k}$ and $J^i_{\bpi_k}$ \by{given by $J^i_{\bpi_k} ( x ) = J^i ( \bpi_k ,x )$ for all $x \in \xx$.} \; 
		
		\If	{	$J^i_{\bpi_k} ( x ) \leq \min_{a^i \in \uu^i } Q^{*i}_{\bpi^{-i}_k} ( x, a^i ) + \epsilon + d^i \quad \forall x \in \xx$  } 
				{ $\pi^i_{k+1} = \pi^i_k$ }
		\Else(   $\pi^i_k \notin \BR^i_{\epsilon} {(} \bpi^{-i}_{k} {)}$   ){
			$ \pi^i_{k+1} \sim (1- e^i ) {\rm UpdateRule}^i ( \cdot | \pi^i_k, Q^{*i}_{\bpi^{-i}_k} , J^i_{\bpi_k} ) + e^i \uniform ( \Pi^i ) $

			\BlankLine
		}

		Go to $k+1$  \;
	}
	
	\caption{ Randomized Policy Revision for agent $i \in \NN$ (with oracle)} \label{algo:oracle}
\end{algorithm2e}

\begin{lemma} \label{lemma:oracle} 

Let $\GG$ be an $N$-player stochastic game and let $\epsilon \geq 0$. Suppose that $\bPi \subset \bGamma_{S}$ is a finite subset of policies such that $\GG$ has the $\epsilon$-satisficing paths property within $\bPi$. If all players update their policies according to Algorithm~\ref{algo:oracle}, then 
\[
\lim_{ k \to \infty}  \Pr ( \bpi_k \in \eq[\epsilon]_{S} ) = 1.
\]
\end{lemma}

\begin{proof} We have that the stochastic process $\{ \bpi_{k} \}_{k \geq 0}$ is a time homogenous Markov chain on $\bPi$, and that any policy $\bpi^{*} \in \bPi \cap \eq[\epsilon]_{S}$ is an absorbing state for this Markov chain. 

By hypothesis, we have that for each $\tilde{\bpi} \in \bPi$, there exists an $\epsilon$-satisficing path of finite length from $\tilde{\bpi}$ into $\bPi \cap \eq[\epsilon]_{S}$. For each $\tilde{\bpi} \in \bPi$, we let $L_{ \tilde{\bpi}} < \infty$ denote the shortest path of positive probability from $\tilde{ \bpi }$ to some $\epsilon$-equilibrium policy in $\bPi$, and we let $p_{ \tilde{\bpi}} > 0$ be the probability that the Markov chain $\{ \bpi_{k} \}_{k \geq 0}$ follows this path in $L_{\tilde{\bpi}}$ steps conditional on $\bpi_0 = \tilde{\bpi}$. 

We then define $L := \max \{ L_{\tilde{\bpi}} : \tilde{\bpi} \in \bPi \}$ and $p := \min \{ p_{\tilde{\bpi} } : \tilde{\bpi} \in \bPi \}$. We have $L < \infty$ and $p > 0$ by the finiteness of the set $\bPi$. Then, for any $k \geq 0$ we have
\[
\P \left( \bigcap_{j = 1}^M \left\{ \bpi_{k+jL} \notin \eq[\epsilon]_{S} \right\} \middle|  \bpi_k \right) \leq (1-p)^M \to 0, \text{ as } M \to \infty.
\]

\end{proof}

\begin{corollary} \label{corollary:symmetric-oracle} 
Let $\GG$ be an $N$-player symmetric game and let $\epsilon > 0$. Suppose $\bPi \subset \bGamma_{S}$ is a sufficiently fine quantization of $\bGamma_{S}$ such that $\bPi \cap \eq[\epsilon]_{S} \not= \varnothing$ and $\Pi^i = \Pi^j$ for all $i,j \in \NN$. If all players update their policies according to Algorithm~\ref{algo:oracle}, then
\[
\lim_{ k \to \infty}  \Pr ( \bpi_k \in \eq[\epsilon]_{S} ) = 1.
\]
\end{corollary}

\by{We note that, by Corollary~\ref{corollary:quantized-equilibrium}, a policy subset $\bPi$ satisfying the conditions of Corollary~\ref{corollary:symmetric-oracle} can always be found by taking a sufficiently fine quantization of $\bGamma_{S}$.}

\by{
\subsection*{Remarks} From Lemma~\ref{lemma:oracle}, one can see that the existence of $\epsilon$-satisficing paths within a finite subset of policies is, in fact, a sufficient condition for the convergence to $\epsilon$-equilibrium of any $\epsilon$-satisficing process that incorporates random search with positive probability, provided that agents restrict their policies to the finite set in question.

The significance of symmetric games in Corollary~\ref{corollary:symmetric-oracle} is that---because of Corollary~\ref{corollary:symmetric-equality} and the proof of Theorem~\ref{thm:symmetric-paths}---it can be guaranteed that satisficing and quantization are compatible: whenever the quantization $\bPi$ is sufficiently fine and symmetric ($\Pi^i = \Pi^j$ for all $i, j \in \NN$), $\GG$ has the $\epsilon$-satisficing paths property within $\bPi$, by Theorem~\ref{thm:generalized-symmetric-paths}.

When symmetry is not assumed, it is not immediately clear that a given fine quantization will admit $\epsilon$-satificing paths to $\epsilon$-equilibrium. For this reason, we do not provide an analog of Corollary~\ref{corollary:symmetric-oracle} for two-player general sum games, and the learning algorithm and results in the next section are only presented for symmetric $N$-player games.
}
\ \\
Although Algorithm~\ref{algo:oracle} cannot be used \textit{as is} by online independent learners, it does offer insights for the design of independent learners. In particular, both the $\epsilon$-best-responding condition of Line 9 in Algorithm~\ref{algo:oracle} as well as the policy update rule in Line 12 depend on the unobserved joint policy $\bpi_k$ only through the quantities $Q^{*i}_{\bpi^{-i}_{k}}$ and $J^i_{ \bpi_k }$,  \by{which we define by $J^i_{\bpi_k} ( x ) := J^i ( \bpi_k, x ) $ for each $x \in \xx$.}  Crucially, both of these quantities can be estimated during play via learning \textit{without} observing the joint action sequence $\{ \bu^{-i}_t \}_{t \geq 0}$. In the next section, we combine the adaptation mechanism of Algorithm \ref{algo:oracle} with learning to replace the oracle and achieve guarantees on finding $\epsilon$-equilibrium even with independent learners in an online setting.  
 
\subsection{Choice of \by{Parameters} and Update Rule}

We conclude with a brief discussion on the choices of \by{$e^i \in (0,1)$, $\Pi^i$,} UpdateRule$^i \in \PP ( \Pi^i | \Pi^i \times \rr^{\xx \times \uu^i } \times \rr^{\xx} )$ in Algorithm~\ref{algo:oracle} and their effects on convergence to $\eq[\epsilon]_{S}$.

From the proof of Lemma~\ref{lemma:oracle}, one sees that---from the $\epsilon$-satisficing paths property within $\bPi$ alone---random search over $\bPi$ is sufficient to find and stay at an $\epsilon$-equilibrium when using Algorithm~\ref{algo:oracle}. \by{This leads to the requirement that $\bPi$ is selected in such a manner that the game has the $\epsilon$-satisficing paths property within $\bPi$. 

In the context of symmetric stochastic games, this can be done by selecting $\bPi$ to be a symmetric $\xi$-quantization of $\bGamma_{S}$ with $\xi$-sufficiently small so that $\bPi$ contains an $\epsilon$-equilibrium. From a practical point of view, taking the coarsest such quantization is sensible, because it narrows the search space. The required fineness of quantization, as measured by $\xi$, can be determined using the system data (cost functions, discount factors, and the transition kernel) by posing the question as an optimal quantization problem. This is an interesting question that we leave for future research.

Once an appropriate choice of $\bPi$ has been made, we require that $e^i \in ( 0,1)$ for each player $i \in \NN$, so that we are guaranteed that random search can drive play to $\epsilon$-equilibrium within $\bPi$. This is done to avoid \textit{requiring} sophisticated policy update rules that leverage intimate knowledge of the game at hand.} However, \by{selecting $e^i$ too large} may be prohibitively slow, and the choice of UpdateRule$^i$ may result in significant speed-up in driving play to $\epsilon$-equilibrium. 

Thus, the choice of UpdateRule$^i$ represents one area in which algorithm designers can incorporate knowledge of the system being controlled when selecting the particular update rules for the various agents. For instance, inertial best-response dynamics may be appropriate in cooperative settings (as in \cite{AY2017}), while update rules employing gradient ascent/descent may be more appropriate in adversarial settings (as in \cite{daskalakis2020independent}).

\section{Synchronized two timescale learning algorithm} \label{sec:full-algorithm} 

In this section, we present Algorithm~\ref{algo:main}, an independent learning algorithm suitable for online play of symmetric stochastic games under the decentralized information structure of Assumption~\ref{ass:independent-learners}. This algorithm can be interpreted as a noise-perturbed, two-timescale variant of Algorithm~\ref{algo:oracle}, where now action values and state values are estimated rather than obtained from an oracle. 

The algorithm design approach used here builds on a technique presented in \cite{AY2017}, which decouples learning and adaptation: players fix their policies for long intervals of time called exploration phases, during which they update their learning iterates. At the end of an exploration phase, players update their policies using the learned information and then reset their learning iterates ahead of the next exploration phase. This decoupled design is used to mitigate the challenges related to learning in a non-stationary environment, which is among the fundamental difficulties in MARL \cite{hernandez2017survey, zhang2021multi}. At its core, this approach consists of four parts:
\begin{enumerate}
	\item Time is partitioned into intervals called ``exploration phases,'' the $k^{th}$ lasting $T_k \in \nn$ stage games, beginning with the stage game at time $t_k := \sum_{ l = 0}^{k-1} T_l$ and ending after the stage game at $t_k + T_k - 1$;
	
	\item \textit{Within} an exploration phase, agent $i \in \NN$ follows a fixed policy and obtains feedback data on state-action-cost trajectories;
	
	\item \textit{Within} an exploration phase, agent $i$ processes feedback data for policy evaluation, estimation of best-response sets, and estimation of state-action values. This is done without access to the joint action information or knowledge of the joint policy;
	
	\item \textit{Between} the $k^{th}$ and $(k+1)^{th}$ exploration phases, agent $i$ uses the learned information to update its policy from $\pi^i_k$ to $\pi^i_{k+1}$. 	We focus here on $\epsilon$-satisficing update rules.				 \label{adapt-item}
\end{enumerate}

Since the policy of each player is held constant within an exploration phase, this algorithm is not a two-timescale algorithm in the traditional sense (wherein two separate sequences of iterates are updated at every step, with one iterate sequence being updated using an asymptotically larger learning rate than the other sequence), but it is two-timescale in spirit: policy adjustment is done on the slow timescale (indexed by exploration phases) while learning iterates are updated on the fast timescale (indexed by stage games). 

\begin{algorithm2e}[h] 
	\SetAlgoLined
	\DontPrintSemicolon
	\SetKw{Receive}{Receive}
	\SetKw{parameters}{Set Parameters}
	\SetKw{initialize}{Initialize}
	\SetKwBlock{For}{for}{end}
	
	\parameters \;
	\Indp
	$\Pi^i \subset \Gamma^i_{S}$: a fine quantization of $\Gamma^i_{S}$ satisfying Assumption~\ref{ass:quantized-set} \;
	UpdateRule$^i \in \PP ( \Pi^i | \Pi^i \times \rr^{\xx \times \uu^i } \times \rr^{\xx} ) $: described in \S\ref{sec:oracle} \;
	$\mathbb{Q}^i \subset \mathbb{R}^{ \mathbb{X}\times\mathbb{U}^i }$ and $\mathbb{J}^i \subset \rr^{\xx}$: compact sets \; 
	
	$\{ T_k \}_{k \geq 0}$: a sequence in $\nn$ of exploration phase lengths \;
	\vspace{2pt}
	\Indp Put $t_0 = 0$ and $t_{k+1} = t_k + T_k$ for all $k \geq 0.$ \;
	\vspace{2pt}
	\Indm 
	
	$e^i \in (0,1)$: random policy updating probability \;
	$d^i\in(0,\infty)$: tolerance level for sub-optimality \;
	
	\Indm
	\BlankLine
	
	\initialize  $\pi_0^i \in \Pi^i$, $\widehat{Q}_0^i \in \mathbb{Q}^i$, $\widehat{J}^i_0 \in \mathbb{J}^i$ (all arbitrary) \;
	\BlankLine
	\Receive Initial state $x_0 \in \xx$ \\
	
	\For($k \geq 0$ ($k^{th}$ exploration phase{)}){ 
		\For( $t = t_k, t_k +1, \dots, t_{k+1} - 1$  \tcp*[f]{Policy evaluation loop} ){
			Select $u^i_t \sim  \pi^i_k$ 	 \;
			
			\Receive cost $c^i_t := c^i( x_t, u^i_t, \bu^{-i}_t)$, state $x_{t+1}$ \;
			Set $n_t^i := \sum_{ \tau = t_k}^{t} \textbf{1} \{ (x_{\tau}, u^i_{\tau} ) = ( x_t, u^i_t ) \}$ \;
			Set $m_t^i := \sum_{\tau = t_k}^{t} \textbf{1} \{ x_{\tau} = x_t \}$ \;

			\BlankLine
			$\widehat{Q}_{t+1}^i(x_t,u_t^i)  =   \left(1-\frac{1}{n_t^i} \right) \widehat{Q}_t^i(x_t,u_t^i) + \frac{1}{n_t^i} \left[ c^i_t +  \beta^i \min_{v^i} \widehat{Q}_t^i(x_{t+1},v^i) \right]$  \;
			\BlankLine
			 $\widehat{Q}_{t+1}^i (x, a^i) =   \widehat{Q}_t^i(x,a^i )$,   for all $(x,a^i) \not= (x_t,u_t^i)$\;
			\BlankLine
			$\widehat{J}^i_{t+1} ( x_t) = \left(1 - \frac{1}{m^i_t} \right) \widehat{J}^i_t ( x_t ) + \frac{1}{m^i_t} \left[  c^i_t + \beta^i \widehat{J}^i_t ( x_{t+1} )	\right] $ \;
			$\widehat{J}^i_{t+1} ( x ) = \widehat{J}^i_t ( x)$, for all $x \not= x_t$ \;
		}
		\BlankLine
		\tcp*[h]{ Policy Update } 

		\If {$\widehat{J}^i_{t_{k+1}} ( x ) \leq \min_{a^i} \widehat{Q}^i_{t_{k+1}} ( x, a^i )  + \epsilon + d^i$  $\quad \forall x \in \xx$ } {
		$\pi^i_{k+1} \leftarrow \pi^i_k$   } 
		\Else(){ 
			$\pi^i_{k+1} \sim (1-e^i ){\rm UpdateRule}^i ( \cdot | \pi^i_k, \widehat{Q}^i_{t_{k+1}}, \widehat{J}^i_{t_{k+1}} ) + e^i \uniform( \Pi^i )$
		}

		\BlankLine
		
		Reset $\widehat{Q}_{t_{k+1}}^i$ to any $Q^i\in\mathbb{Q}^i$ (e.g., project $\widehat{Q}_{t_{k+1}}^i$ on $\mathbb{Q}^i $) \;
		Reset $\widehat{J}^i_{t_{k+1}}$ to any $J^i \in \mathbb{J}^i$ \; 
	}
	
	\caption{Independent Learning with $\epsilon$-satisficing (for agent $i$)} \label{algo:main}
\end{algorithm2e}

\subsection{Convergence Result for Algorithm~\ref{algo:main}}  We now offer a formal guarantee for the performance of Algorithm~\ref{algo:main} under self-play. Throughout the remainder of this section, we fix the symmetric game $\GG$ and the constant $\epsilon > 0$. We make the following assumptions. 

\begin{assumption} \label{ass:quantized-set}
For a fixed symmetric game $\GG$ and $\epsilon > 0$, the set of policies $\bPi \subset \bGamma_{S}$ is a quantization of $\bGamma_{S}$ with the following properties:
\begin{itemize}
	\item $\Pi^i = \Pi^j$ for each player $i,j \in \NN$;
	\item For any $i \in \NN$ and $ \pi^i \in \Pi^i$, the policy $\pi^i$ is soft; 
	\item The set $\bPi \cap \eq[\epsilon]_{S}$ is non-empty
\end{itemize} 
\end{assumption}

\noindent \by{Such a policy subset $\bPi$ always exists by Corollary~\ref{corollary:quantized-equilibrium} and the discussion of \S\ref{ss:quantization}.}

Next, we introduce a constant $\bar{d} = \bar{d} ( \bPi, \epsilon)$ that depends on the game $\GG$, $\bPi$, and $\epsilon$. We require that the tolerance for sub-optimality $d^i$ is positive to account for noise in the action value and state value estimates; however, $d^i$ cannot be taken too large, otherwise player $i$ may mistakenly suppose it is $\epsilon$-best-responding when using policy that is truly $\epsilon$-suboptimal policy. We let $\bar{d} := \min \left( S \setminus \{ 0 \} \right)$, where $S$ is a finite set defined as 
\[
S := \left\{	\left| \epsilon - \left( J^i ( \bpi ,  x ) - \min_{a^i \in \uu^i} Q^{*i}_{ \bpi^{-i}} (x, a^i) \right) \right|	: i \in \NN, (\pi^i , \bpi^{-i}) \in \bPi, x \in \xx	\right\}. 
\]

\by{
Prior to taking absolute values, elements of the set $S$ can be interpreted as shifted sub-optimality gaps: for each player $i \in \NN$, policy $\bpi = ( \pi^i, \bpi^{-i}) \in \bPi$, and state $x \in \xx$, the quantity $J^i  (\bpi, x ) - \min_{a^i \in \uu^i } Q^{*i}_{\bpi^{-i}} ( x, a^i )  \geq 0$ measure the degree to which player $i$'s policy $\pi^i$ is sub-optimal against $\bpi^{-i}$ in state $x$. Thus, measuring 
\[
\left| \epsilon - \left( J^i  (\bpi, x ) - \min_{a^i \in \uu^i } Q^{*i}_{\bpi^{-i}} ( x, a^i )  \right) \right|  
\]
captures how close player $i$ is to \emph{exactly} $\epsilon$-best-responding in state $x$, and $\bar{d}$ is the minimum non-zero discrepancy from exact $\epsilon$-best-responding. 
}

\begin{assumption} \label{ass:rho-delta}
For all $i \in \NN$,  $d^i \in (0, \bar{d} \,  )$.
\end{assumption}

We also make the following standard assumption on the state transition kernel $P$, which is necessary to ensure that no proper subset of states is absorbing.
\begin{assumption} \label{ass:states}
For any two state $x, x' \in \xx$, there exists $H = H(x,x') \in \nn$ and sequences of states $\{ s_k \}_{k = 0}^{H+1}$ and joint actions \emph{$\{ \ba_k \}_{k = 0}^{H}$} such that $s_0 = x$, $s_{H+1} = x'$, and 
\[
	\text{\emph{$\prod_{j = 0}^H P ( s_{j+1} | s_j, \ba_j ) > 0,$}}
\]
\end{assumption}

\by{
Assumption~\ref{ass:states} requires that there is a non-zero probability of transitioning from any initial state to any other state in a finite number of steps, provided the players select an appropriate sequence of joint actions. Both the number of steps and the sequence of joint actions in question are allowed to depend on the pair of states. As such, this assumption is quite weak, and is commonly made elsewhere in the literature. (See, for instance, \cite[Assumption 2-i]{sayin2021decentralized}, for an equivalent assumption on the transition kernel.) This assumption is necessary to ensure that no proper subset of states is absorbing.
}

\begin{theorem} \label{thm:main}
Let $\GG$ be a symmetric game, $\epsilon> 0$. Suppose all players use Algorithm~\ref{algo:main} to play the game $\GG$, and that Assumptions~\ref{ass:quantized-set}, \ref{ass:rho-delta}, and \ref{ass:states} hold. Then, for any $\psi > 0$, there exists $\tilde{T} = \tilde{T} ( \psi, \epsilon, \bPi, \{ d^i \}_{i \in \NN} )$ such that if $T_k \geq \tilde{T}$ for all exploration phase indices $k \geq 0$, then
\[
	\Pr \left( \bpi_k \in \bPi \cap \eq[\epsilon]_{S} \right) \geq 1 - \psi , \quad \text{for all sufficiently large $k$.}
\]
\end{theorem}

\noindent The proof of Theorem~\ref{thm:main} is given in Appendix~\ref{appendix:pf-main}. \\


\by{
\subsection*{Remarks} 
We have chosen to present Algorithm~\ref{algo:main} and Theorem~\ref{thm:main} in terms of symmetric games and the quantized set of policies $\bPi$ to take advantage of the structural properties established in earlier sections. It is worth noting that the algorithm and convergence guarantees above can be applied more generally: in analogy to Lemma~\ref{lemma:oracle} and the remarks following Corollary~\ref{corollary:symmetric-oracle}, Algorithm~\ref{algo:main} can be used to drive play to $\epsilon$-equilibrium in non-symmetric games, provided the game has the $\epsilon$-satisficing paths property within the finite set $\bPi$ and the policies in $\bPi$ are soft. These conditions can also be shown to hold for (non-symmetric) stochastic teams and exact potential games, among other classes of games. 
}

\section{Simulations} \label{sec:simulation} \by{We now present the results of a simulation study of Algorithm~\ref{algo:main} applied to a symmetric stochastic, described below. }

\begin{figure}[h]
\centering
\begin{subfigure}{.5\textwidth}
  \centering
\begin{game}{2}{2}[]
			&$a_0$      							&$a_1$					\\
$a_0$ 		&5,5	  			&0,0 										\\
$a_1$		&0,0				&5,5				
\end{game}
  \caption{State $s_0$: a coordination game}
  \label{fig:sub1}
\end{subfigure}%
\begin{subfigure}{.5\textwidth}
  \centering
\begin{game}{2}{2}[]
			&$a_0$      							&$a_1$					\\
$a_0$ 		&0.75, 0.75 	  					&1.5,0.5		\\
$a_1$		&0.5,1.5							&1,1	
\end{game}
  \caption{State $s_1$: prisoner's dilemma}
  \label{fig:sub2}
\end{subfigure}
\caption{The stage games for a two-state stochastic game. Player 1 (2) picks a row (column), and its reward, to be maximized, is the 1st (2nd) entry of the chosen cell.}
\label{fig:two-player-game}
\end{figure}

\by{
Here, the player set is $\NN = \{ 1, 2\}$, the state space $\xx = \{ s_0, s_1 \}$, and the action sets $\uu^1 = \uu^2 = \{ a_0, a_1 \}$.  The cost functions are taken to be the negatives of the stage reward functions described in Figure~\ref{fig:two-player-game}. The discount factor is $\beta^i = 0.8$ for each $i \in \NN$, and the transition kernel $P$ is fully described by the following equations.

\begin{align*}
\P \left( x_{t+1} = s_0 \middle| x_t = s_0, u^1_t = b^1, u^2_t = b^2 \right) &= \begin{cases}
																0.8, &\text{if } b^1 = b^2 \\
																0.2, &\text{otherwise.}
															\end{cases}
														\\
\P \left( x_{t+1} = s_0 \middle| x_t = s_1, u^1_t = b^1, u^2_t = b^2  \right) &= \begin{cases}
																0.9, &\text{if } b^1 = b^2 = a_1 \\
																0.25, &\text{otherwise.}
															\end{cases}											
\end{align*}

The game described above involves two states with rather different strategic qualities: in state $s_0$, the players are incentivized to coordinate actions, so as to receive a high reward and remain in the high-value state $s_0$. By contrast, in state $s_1$, players face the prisoner's dilemma stage game, with the added consideration that successfully cooperating (playing action $a_1$) drives play back to the high-value state $s_0$ with high probability, but cooperating while the other player defects (plays $a_0$) results in a lower immediate reward and does not increase the likelihood of transitioning to state $s_0$. 
}

\by{
\subsection*{Parameter Choices}

For each $i \in \NN$, we chose the quantized policy sets $\Pi^i$ as follows. First, we define $\tilde{\Pi}^i \subset \Gamma^i_{S}$ as $\tilde{\Pi}^i := \{ \pi^i \in \Gamma^i_{S} : 10^2 \pi^i ( a^i | s ) \in \zz , \forall s \in \xx, a^i \in \uu^i \}$. That is, policies in $\tilde{\Pi}^i $ can be described using two digit precision after the decimal point in each component probability distribution. We then define a function ${\rm Soft}: \Gamma^i_{S} \to \Gamma^i_{S}$ by
\[
{\rm Soft}(\pi^i) := 	\begin{cases}
					\pi^i, &\text{if } \pi^i \text{ is 0.025-soft}, \\
					0.9 \pi^i + 0.1 \cdot \pi^i_{\rm unif} &\text{otherwise},
				\end{cases}
\]
where $ \pi^i_{\rm unif} $ is the policy that selects actions uniformly at random in each state. Finally, we put $\Pi^i = {\rm Soft} ( \tilde{\Pi}^i )$. 

We put $\epsilon = 2$. Although this may appear to be a rather large choice of $\epsilon$, the relatively large discount factor of $\beta^i = 0.8$ leads to aggregate long-run rewards that are an order of magnitude larger. Empirically, we found that performance within $\epsilon = 2$ of a 0-best-response entailed achieving over 80\% of one's optimal return in state $s_1$ and over 85\% of one's optimal return in state $s_0$.

We ran 250 trials of Algorithm~\ref{algo:main}. Each trial consisted of 50 exploration phases of length 50,000, for a total of 2.5 million stage games per trial. For our choice of policy update rule, we set UpdateRule$^i$ to be a gradient ascent-type policy update, in which the player increments its policy toward a 0-best-response using a small step size. Our empirical results, in which we observe the frequency of $\epsilon$-equilibrium rising to over 90\% of trials,  are summarized in Figure~\ref{fig:graph} and Table~\ref{table:some-results}. 

\begin{figure} 
\includegraphics[width = 0.85\textwidth]{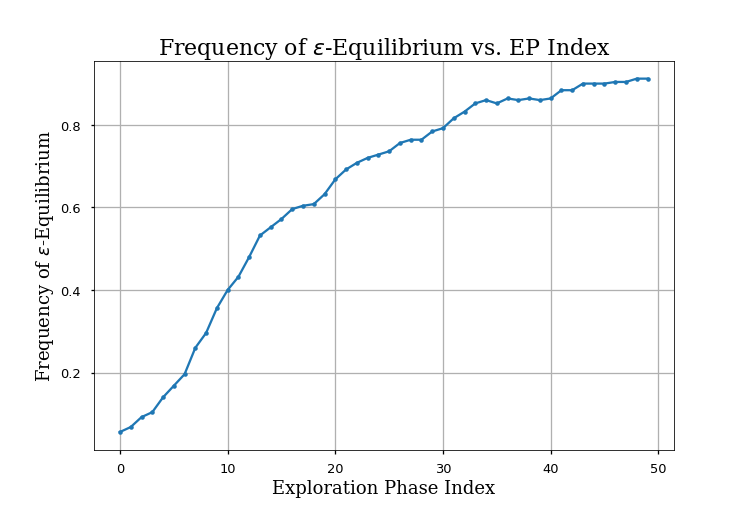}   
\centering
\caption[]{Frequency of $\{ \bpi_k \in \eq[\epsilon]_{S} \cap \bPi \}$, averaged over 250 trials.}
\label{fig:graph}
\end{figure}

\begin{table}
\renewcommand{\arraystretch}{1.5}
\begin{tabular}{ c | c | c | c | c | c | c }
EP Index						& 1 		& 10 			& 25		&40 		& 45 		&  50 \\
\hline
$\frac{1}{250} \sum_{k=1}^{250} \textbf{1} \{ \bpi_k \in \eq[\epsilon] \}$	&0.056	&0.356		& 0.728	&0.86	&0.9		& 0.912
\end{tabular}
\centering
\caption{Selected frequencies of $\epsilon$-equilibrium at various exploration phase indices}   \label{table:some-results}
\end{table}
}

\section{Discussion}  \label{sec:discussion}

Algorithm~\ref{algo:main} is in the tradition of Foster and Young's regret testing algorithm \cite{foster2006regret}. Their seminal algorithm was developed for stateless repeated games and came with convergence guarantees for general (stateless, finite) two-player games. A variant of the regret testing algorithm was studied in \cite{germano2007global}, where it was shown that the variant algorithm converges to equilibrium in any \textit{generic} $N$-player stateless finite game.\footnote{We note that the class of generic games is a proper subclass of all games. As such, this guarantee may not hold for finite stateless games in general.} Other papers in the regret testing tradition include \cite{Marden2009payoff} and \cite{AY2017}, which studied algorithms in weakly acyclic games.

Unlike earlier contributions in the regret testing tradition, the line of proof used here relies on the novel structure of $\epsilon$-satisficing paths. Similarly, while other rigorous contributions to the study of independent learners in stochastic games have relied heavily on various structural assumptions (most notably focusing on two-player zero-sum games, $N$-player teams, and potential games), it appears that the rich structure of $\epsilon$-satisficing paths has been under-exploited. To our knowledge, Algorithm~\ref{algo:main} is the first independent learner for general symmetric stochastic games that comes with formal guarantees of convergence to $\epsilon$-equilibrium. 

Like the previous algorithms in the regret testing tradition, one shortcoming of Algorithm~\ref{algo:main} is that players update their policies synchronously at the end of each synchronized exploration phase. This may be justifiable in certain settings where time is naturally partitioned (e.g. competitions with a natural off-season; games where certain stage game decision must be made quickly while others can be made slowly) or in cooperative applications; however, in other settings it may be inappropriate. Empirically, it appears that perfect synchrony is not needed, and \cite{Marden2009payoff} offers a possible approach to formalizing this, but as of yet, no proof has been given. This issue can potentially be addressed by using a conventional two-timescale algorithm, in which agents update their policies after every stage game with a slow learning rate while they update their learning iterates using a fast learning rate.


\subsection{Future Work}

It is natural to ask whether all $N$-player stochastic games have the $\epsilon$-satisficing paths property for any $\epsilon \geq 0$. It is clear that the proof of Theorem~\ref{thm:symmetric-paths} is not amenable to generalization, since in general games players will not have matching policy sets and therefore cannot imitate one another's policies to grow the size of the cohort. Generalizing the proof of Theorem~\ref{thm:two-player-paths} is perhaps more promising, but the case-by-case breakdown used there becomes significantly complicated when $N > 2$. We leave this question open for future research.

The study of learning in symmetric games is applicable to several other technical fields. One such field is the emerging area of mean field game theory, which has been used to model various large-scale decentralized decision-making systems, such as 
traffic networks \cite{chevalier2015micro}. 
Mean-field games (see e.g., \cite{huang2006large, huang2007large, lasry2007mean,carmona2016mean}) 
can be viewed as limit models of symmetric games with finitely many agents and weakly coupled interactions. A number of works have investigated the deep connection between finite symmetric games and mean-field games with a continuum of players, e.g. \cite{fischer2017connection, sanjari2018optimal, sanjari2019optimal, sanjari2020optimality}. Therefore, results on equilibrium, learning, and dynamics in symmetric games are consequential for developing a theory of learning in large-scale, decentralized systems. \by{Some of the ideas presented in this paper have been applied to $N$-player mean field games in \cite{yongacoglu2022independent}.}

\by{Pertaining specifically to the algorithms presented here, an important question for future work is to determine how one should select the quantizer determining the finite policy sets $\{ \Pi^i \}_{i \in \NN}$ so as to achieve optimal performance with the algorithm. When selecting a $\xi$-quantizer, one must balance selecting sufficiently small quantizer bin radius $\xi$, which is needed to guarantee that some $\epsilon$-equilibrium exists in the resulting quantization, with ensuring that the cardinality of the quantized set is relatively small, to avoid searching a needlessly large search space and slowing down convergence to $\epsilon$-equilibrium.}

\by{In the same vein, another important question for future work involves generalizing the learning procedure. In this paper, we have focused on (standard) tabular Q-learning and state value estimation using linear learning rates to produce each player's estimate of whether it is $\epsilon$-best-responding. This choice was made for simplicity of exposition and in order to make rigorous claims about convergence behaviour. Since Q-learning can be quite slow to converge (see, for instance, \cite{szepesvari1997asymptotic,even2003learning}), it would be desirable to study variants of our algorithm that employ other learning algorithms, such as Speedy Q-learning \cite{azar2011speedy}, Zap Q-learning \cite{devraj2017zap}, or some form of function approximation in an attempt to shorten exploration phase lengths.}

\by{
\subsection{On Terminology and Other Related Work}

We wish to conclude our discussion by situating our work in the broader literature on multi-agent reinforcement learning.

In using the terminology of ``independent learners'' to describe learners in stochastic games with full state observability at each agent but no action-sharing between agents, we follow the terminological conventions of \cite{Claus1998} and \cite{matignon2009coordination}. Given that the field of MARL is relatively young, this convention, like many others, is not uniform. Some authors use the term ``independent learner'' to describe self-interested learners who employ best-response or policy gradient-type algorithms in stochastic games, and this terminology does not convey any assumptions about action-sharing. For examples of work that uses this language, see the recent survey by \cite{ozdaglar2021independent} and the references therein. Using the language of \cite{ozdaglar2021independent} , the object of our study is model-free MARL in the minimal information setting.

Many studies in MARL, including those presented here, are concerned with asymptotic convergence guarantees. A separate line of results is concerned with giving non-asymptotic guarantees in stochastic games, such as regret bounds for the performance of a particular agent in a multi-agent system. We selectively cite \cite{wei2017online,bai2020near} and \cite{liu2021sharp} as results in this second line, and we refer the reader to \cite[Section 5]{ozdaglar2021independent} for an excellent review of recent work in this area.

A third line of research in MARL is concerned with the complexity of computing various equilibrium concepts in stochastic games, including the recent work of \cite{daskalakis2022complexity}, which establishes the PPAD-hardness of computing (stationary Markov) $\epsilon$-coarse correlated equilibria in stochastic games. We wish to point out that there is no inherent conflict in the negative results of \cite{daskalakis2022complexity} and the positive results presented here: firstly, we study randomized algorithms and give high probability guarantees, and, secondly, we do not offer complexity analysis for our algorithms. Furthermore, although we prove the existence of short $\epsilon$-satisficing paths to $\epsilon$-equilibrium in two-player general sum games and in $N$-player symmetric games, we do not study the complexity of computing such a path. 
}

\section{Conclusions} \label{sec:conclusion}

In this paper, we introduced the $\epsilon$-satisficing paths property, a useful structural property pertaining to policy revision dynamics in stochastic games. We proved that two important classes of games, namely $N$-player symmetric games and two-player general games, both have this property. In the case of $N$-player symmetric games, we have exploited this structure to design an independent learner and showed that this algorithm drives play to near equilibrium with arbitrarily high probability under self-play. This is the first result of its kind for this class of games, and we believe that a similar design approach can be used to establish analogous results in other classes of games admitting the $\epsilon$-satisficing paths property for $\epsilon > 0 $.

\newpage
\appendix
\section{Proof of Theorem~\ref{thm:main}} \label{appendix:pf-main} In this section, we prove Theorem~\ref{thm:main}. To study the evolution of the policy process $\{ \bpi_k \}_{k \geq 0}$ obtained by Algorithm~\ref{algo:main}, we first study the convergence behaviour of the learning iterates $\{ \widehat{Q}^i_t \}_{t \geq 0}$ and $\{ \widehat{J}^i_t \}_{t \geq 0}$ and then argue that properly selected parameters result in policy updates similar to those obtained by the oracle process studied in Lemma~\ref{lemma:oracle}.

We note that when agent $i$ uses Algorithm~\ref{algo:main} to select its actions, it is using a particular randomized, non-stationary policy. When all agents employ Algorithm~\ref{algo:main}, we use $\P$ (with no policy index in the superscript) to denote the probability measure on trajectories of states and joint actions. In contrast, for all other joint policies $\bpi \in \bGamma$, we use $\P^{\bpi}$ to denote the associated probability measure on trajectories of states and joint actions. This distinction is made to facilitate comparing and analyzing various stochastic learning iterates.

\subsection{Convergence Behaviour of Learning Iterates}
We begin by studying the convergence of player learning iterates $\{ \widehat{Q}^i_t \}_{t \geq 0}$ and $\{ \widehat{J}^i_t \}_{t \geq 0}$. Since the policy updates are informed by the iterate sequences only at the end of each exploration phase, we give special attention to the iterate values at the sample times $\{ t_{k+1} \}_{k \geq 0}$. That is, we are interested in the sequences $\{ \widehat{Q}^i_{t_{k+1}} \}_{k \geq 0}$ and $\{ \widehat{J}^i_{t_{k+1}} \}_{k \geq 0}$. 

In the interest of comparing the iterate sequences $\{ \widehat{Q}^i_t \}_{t}$ and $\{ \widehat{J}^i_t \}_{t}$ with easily analyzed iterate sequences, we now introduce two related stochastic processes, $\{ \bar{Q}^i_t \}_{t \geq 0}$ and $\{ \bar{J}^i_t \}_{t \geq 0}$. The latter sequences are obtained using the state-action-cost trajectories using Algorithm~\ref{algo:Q-and-J}, below.

\begin{algorithm2e}[h]  
	\SetAlgoLined
	\DontPrintSemicolon
	\SetKw{inputs}{Inputs}
	\SetKwBlock{For}{for}{end} 
	
	\inputs    \;
	\Indp 
	$\bar{Q}^i_0 \in \mathbb{Q}^i \subset \rr^{\xx \times \uu^i}$, where $\mathbb{Q}^i$ is compact \; 
	$\bar{J}^i_0 \in \mathbb{J}	^i \subset \rr^{\xx}$, where $\mathbb{J}^i$ is compact \; 
	Trajectory $\{ ( x_t, u^i_t, c^i_t ) \}_{t \geq 0}$, where $c^i_t = c^i ( x_t, u^i_t, \bu^{-i}_t)$ for all $t \geq 0$\; 
	\Indm 
	\BlankLine
		
	\For($t \geq 0$ ){ 
		$n^i_t := \sum_{\tau = 0}^{t} \textbf{1} \{ ( x_{\tau}, u^i_{\tau} ) =( x_t, u^i_t ) \}  $   \;
		$m^i_t := \sum_{\tau = 0}^t  \textbf{1 } \{ x_\tau = x_t \} $ \; 
		\BlankLine 
		$\bar{Q}^i_{t+1} (x_t, u^i_t ) = (1 - 1/ n^i_t ) \bar{Q}^i_t ( x_t, u^i_t ) + (1/n^i_t ) \left[ c^i_t + \beta \min_{ \tilde{a}^i \in \uu^i } \bar{Q}^i_t ( x_{t+1} , \tilde{a}^i ) \right] $
		\BlankLine
		$\bar{Q}^i_{t+1} ( s,a^i) = \bar{Q}^i_{t} (s,a^i)$ for all $(s,a^i) \not= ( x_t, u^i_t )$. 

		\BlankLine
		$\bar{J}^i_{t+1} ( x_t ) = (1 - 1 / m^i_t ) \bar{J}^i_t ( x_t ) + (1 / m^i_t ) \left[ c^i_t + \beta \bar{J}^i_t ( x_{t+1} ) \right].$
		\BlankLine
		$\bar{J}^i_{t+1} ( s ) = \bar{J}^i_t (s)$ for all $s \not= x_t$. 
		
	}
	
	\caption{Q- and J-factor Updating Without Resetting} \label{algo:Q-and-J}
\end{algorithm2e}

The sequences $\{ \widehat{Q}^i_t \}_{t \geq 0}$ and $\{ \bar{Q}^i_t \}_{t \geq 0}$ are related through the Q-factor update. There are, however, two major differences. First, Algorithm~\ref{algo:main} instructs player $i$ to reset its counters at the end of the $k^{th}$ exploration phase (i.e. after the update at time $t_{k+1}$, before the update at time $t_{k+1} + 1$), meaning the step sizes differ for the two iterate sequences $\{ \bar{Q}^i_t \}_{ t \geq 0} $ and $\{ \widehat{Q}^i_t \}_{ t \geq 0}$ even when the state-action-cost trajectories are identical. Second, Algorithm~\ref{algo:main} instructs player $i$ to reset its Q-factors at the end of the $k^{th}$ exploration phase, while Algorithm~\ref{algo:Q-and-J} does not involve any resetting. 

Consequently, one sees that the process $\{ \widehat{Q}^i_t \}_{t \geq 0}$ depends on the initial condition $\widehat{Q}^i_0$, the state-action trajectory, and the choice of reset values after each exploration phase. In contrast, the process $\{ \bar{Q}^i_t \}_{t \geq 0}$ depends only on the initial value $\bar{Q}^i_0$ and the state-action trajectory. Analogous remarks hold relating $\{ \widehat{J}^i_t \}_{ t \geq 0}$ and $\{ \bar{J}^i_t \}_{ t \geq 0}$. 

Recall that the $k^{th}$ exploration phase begins with the stage game at time $t_k$ and ends before the stage game at time $t_{k+1} = t_k + T_k$. During the $k^{th}$ exploration phase, the sequences $\{ \widehat{Q}^i_t \}_{t = t_k }^{ t_k + T_k }$ and $\{ \widehat{J}^i_t \}_{t = t_k}^{t_k + T_k }$ depend only on the initial conditions $\widehat{Q}^i_{t_k}$, $\widehat{J}^i_{t_k}$, and the state-action trajectory $x_{t_k}, \bu_{t_k}, \cdots, \bu_{t_k + T_k - 1 }, x_{t_k + T_k}$. This leads to the following useful observation: for any $( s_0, s_1, \cdots, s_{T_k} ) \in \xx^{ T_k + 1 }$ and $( \ba_0, \cdots, \ba_{T_k - 1 } ) \in \bU^{ T_k }$, we have that 
\begin{align*}
\P &\left( \{ x_{t_k + T_k} = s_{T_k} \} \bigcap_{j = 0}^{T_k - 1} \{ x_{t_k + j } = s_{j} , \bu_{t_k + j } = \ba_j  \}  \middle| x_{t_k} = x , \bpi_k = \bpi \right)  \\
= \P^{\bpi} &\left( \{ x_{T_k} = s_{T_k} \} \bigcap_{j = 0}^{T_k-1} \{ x_j = s_j , \bu_j = \ba_j \} \middle| x_0 = x \right) .
\end{align*}

In words, once players following Algorithm~\ref{algo:main} select a policy $\bpi$ for the $k^{th}$ exploration phase, then the conditional probabilities of the trajectories restricted to time indices in that exploration phase can be described by $\P^{\bpi}$, with the indices of the events suitably shifted to start at time 0. This leads to a series of useful lemmas, which we include below for completeness.

\by{In the lemmas below, for notational convenience, we define $J^i_{\bpi} : \xx \to \rr$ as $J^i_{\bpi} ( s ) = J^i ( \bpi, s )$ for all $s \in \xx$ and any $i \in \NN$, $\bpi \in \bGamma_{S}$.}

\begin{lemma} \label{lemma:relating-the-sequences}
Let $\bpi \in \bPi \subset \bGamma_{S}$ be some joint policy and let $i \in \NN$. For any initial conditions $Q^i \in \mathbb{Q}^i$, $J^i \in \mathbb{J}^i$, and state $x \in \xx$, we have the following:
\[
\P \left( \widehat{Q}^i_{t_k + T_k} \in \cdot \middle| \widehat{Q}^i_{t_k} = Q^i, \bpi_k = \bpi, x_{t_k} = x \right) = \P^{\bpi} \left( \bar{Q}^i_{T_k} \in \cdot \middle| \bar{Q}^i_0 = Q^i , x_0 = x \right), 
\]
and 
\[
\P \left( \widehat{J}^i_{t_k + T_k} \in \cdot \middle| \widehat{J}^i_{t_k} = J^i, \bpi_k = \bpi, x_{t_k} = x \right) = \P^{\bpi} \left( \bar{J}^i_{T_k} \in \cdot \middle| \bar{J}^i_0 = J^i, x_0 = x \right). 
\]
\end{lemma}

\begin{lemma}  \label{lemma:uniform-conditional-convergence} 
For any joint policy $\bpi \in \bPi$, player $i \in \NN$, we have the following:
\begin{itemize}
	\item[(i)] $\P^{\bpi} \left( \lim_{t \to \infty} \bar{Q}^i_t = Q^{*i}_{ \bpi^{-i}} \middle| \bar{Q}^i_0 = Q^i, x_0 = x \right) = 1$, for any $x \in \xx$, $Q^i \in \rr^{\xx \times \uu^i}$; 
	
	\item[(ii)] $\P^{\bpi} \left( \lim_{t \to \infty} \bar{J}^i_t = J^i_{\bpi} \middle| \bar{J}^i_0 = J^i, x_0 = x \right) = 1$, for any $x \in \xx$, $J^i \in \rr^{\xx}$; 
		
	\item[(iii)] If $\mathbb{Q}^i \subset \rr^{\xx \times \uu^i}$ and $\mathbb{J}^i \subset \rr^{\xx}$ are compact sets, then items (i) and (ii) hold uniformly in their initial conditions. That is, for any $\xi > 0$, there exists $T = T ( i, \xi, \bpi ) \in \nn $ such that
		\[
		\P^{\bpi} \left( \sup_{t \geq T } \left\| \bar{Q}^i_t - Q^{*i}_{\bpi^{-i}} \right\|_{\infty} < \xi \middle| \bar{Q}^i_0 = Q^i, x_0 = x \right) \geq 1 - \xi,
		\]
		and 
		\[
		\P^{\bpi} \left( \sup_{t \geq T } \left\| \bar{J}^i_t - J^i_{\bpi} \right\|_{\infty} < \xi \middle| \bar{J}^i_0 = J^i, x_0 = x \right) \geq 1 - \xi,
		\]
		for any $Q^i \in \mathbb{Q}^i, J^i \in \mathbb{J}^i, x \in \xx$. 
\end{itemize}
\end{lemma}

\begin{proof}
Items (i) and (ii) are proved using standard stochastic approximation arguments, e.g. those in \cite{tsitsiklis1994asynchronous}. We note that each state-joint action pair is visited infinitely often $\P^{\bpi}$-almost surely. This is obtained by Assumptions~\ref{ass:quantized-set} and \ref{ass:states}: the former which guarantees that $\pi^j$ is soft for every player $j \in \NN$, and the latter guarantees that all states are mutually accessible.

The uniformity claim, in item (iii), was proven for Q-factors alone in \cite[Lemma A1]{AY2017}. The same line of argument can be used for the iterates $\{ \bar{J}^i_t \}_{t \geq 0}$. 
\end{proof}

Combining Lemmas~\ref{lemma:relating-the-sequences} and \ref{lemma:uniform-conditional-convergence}, we get the following result on conditional probabilities

\begin{lemma}  \label{lemma:all-learn-correctly}
Let $k, \ell \in \zz_{\geq 0}$ such that $\ell \geq k$. Let $\mathcal{F}_k$ denote the $\sigma$-algebra generated by the random variables $
 \bpi_k, \{ \widehat{Q}^i_{t_k} , \widehat{J}^i_{t_k} \}_{i \in \NN},$ and $x_{t_k}$. For any $\xi > 0$, there exists $T = T(\xi) < \infty$ such that if $T_{\ell} \geq T$, then $\P$-almost surely, we have 
\[
\P \left( \bigcap_{i \in \NN} \left\{ \left\| \widehat{Q}^i_{t_{ \ell +1}} - Q^{*i}_{\bpi^{-i}_{\ell} } \right\|_{\infty} < \xi \right\} \cap \left\{  \left\| \widehat{J}^i_{t_{\ell+1}} - J^i_{\bpi_{\ell}} \right\|_{\infty} < \xi \right\} \middle| \mathcal{F}_k \right) \geq 1  - \xi. 
\]
\end{lemma}

\subsection{Proof Details}

Let $\Xi := \frac 1 2 \min_{j \in \NN} \{ d^j, \bar{d} - d^j \}$, and for $k \geq 0$, let $E_k$ denote the event that each player $i$ learned its Q- and J-factors to within $\Xi$ of their fixed points during the $k^{th}$ exploration phase. That is, for any $k \geq 0$,
\[
E_k := \bigcap_{i \in \NN} \left\{ \left\| \widehat{Q}^i_{t_{ k +1}} - Q^{*i}_{\bpi^{-i}_{k} } \right\|_{\infty} < \Xi \right\} \cap \left\{  \left\| \widehat{J}^i_{t_{k+1}} - J^i_{\bpi_{k}} \right\|_{\infty} < \Xi \right\} .
\]
For any $\ell \geq 0$, let $E_{k: k+ \ell} := E_k \cap E_{k+1} \cap \cdots \cap E_{k+\ell}$. For any $k \geq 0$, we also define $G_k := \{ \bpi_k \in \bPi \cap \eq[\epsilon]_{S} \}$ to be the event that the policy of the $k^{th}$ exploration phase is an $\epsilon$-equilibrium. 

By our definition of $\bar{d}$, assumption that $d^i \in (0, \bar{d})$ for each $i \in \NN$ (Assumption~\ref{ass:rho-delta}), and our choice of $\Xi$, we have that, given $E_k$, each player $i \in \NN$ correctly ascertains whether $\pi^i_k \in \BR^i_{\epsilon} ( \bpi^{-i}_k )$ by verifying whether $\widehat{J}^i_{t_{k+1}} ( x ) \leq \min_{a^i } \widehat{Q}^i_{t_{k+1}} + \epsilon + d^i$ for each $x \in \xx$. From this, it follows that
\begin{equation} \label{eq:stay} 
\P \left( G_{k+ \ell} \middle| G_k \cap E_{k : k + \ell } \right) = 1, \: \forall \ell \geq 0. 
\end{equation}

Recall the quantity $L := \max \{ L_{\bpi_0 } : \bpi_0 \in \bPi \}$, where for each $\bpi_0 \in \bPi$, $L_{\bpi_0}$ is defined as the shortest $\epsilon$-satisficing path within $\bPi$ beginning at $\bpi_0$ and ending in $\bPi \cap \eq[\epsilon]_{S}$. For any $\bpi \in \bPi$, there exists an $\epsilon$-satisficing path from $\bpi$ into $\bPi \cap \eq[\epsilon]_{S}$, and if this path has length less than $L$, it may be extended to have length $L$ by repeating its final term. Thus, for any $\bpi \in \bPi$, we have the following inequality:
\begin{equation} \label{eq:go}
\P \left( G_{k+L} \middle| \{ \bpi_k = \bpi \} \cap E_{k : k+L} \right) \geq p_{\min} > 0,
\end{equation}
where $p_{\min} := \prod_{i \in \NN} \left(	\frac{ e^i }{ | \Pi^i | } \right)^{L}$. The quantity $p_{\min}$ is a loose lower bound obtained as follows: starting at $\bpi_k = \bpi$, at each step, suppose that every unsatisfied player chooses to experiment (with probability $e^i$ at each step) and selects the policy that follows the specified $\epsilon$-satisficing path by (with probability $1/ | \Pi^i |$ at each step). There are at most $L$ such steps, therefore the probability of following the specified path by random experimentation alone is no less than $p_{\min}$.

Fix $u^{*} \in (0,1)$ such that $\frac{ u^* p_{\min} }{  1 - u^* + u^* p_{\min} } > 1 - \psi/2$. As a consequence of Lemma~\ref{lemma:all-learn-correctly}, there exists $\tilde{T} < \infty$ such that if $T_l \geq \tilde{T}$ for every $l \geq 0$, then we have $\P ( E_{k : k+L} | \bpi_k = \bpi ) \geq u^*$ for all $k \geq 0$ and any $\bpi \in \bPi$. Thus, for any $k \geq 0$, we have that 
\[
\P ( E_{k: k+L} | G_k ) \geq u^* \text{ and }  \P ( E_{k:k+L} | G_k^c ) \geq u^* .
\]

For any $k \geq 0$, we can lower bound $\P ( G_{k+L} )$ by conditioning on $G_k$ and its complement:
\[
\P ( G_{ k + L } ) = \P ( G_{k+L} | G_k ) \P ( G_k ) + \P ( G_{k+L}  | G_k^c ) ( 1 - \P (G_k ) ).
\]

We lower bound the constituent terms above by conditioning again with $E_{k : k+L}$ and invoking \eqref{eq:stay} and \eqref{eq:go} to get
\[
\P ( G_{k+L} ) \geq 1 \cdot  \P ( E_{k : k+L} | G_k ) \cdot \P (G_k )+ p_{\min} \cdot \P ( E_{k:k+L} | G_k^c) \cdot (1 - \P (G_k )).
\]

Assuming $T_l \geq \tilde{T}$ for all $l \geq 0$, this gives 
\[
\P ( G_{k+L} ) \geq   u^* \cdot \P (G_k )+ p_{\min} \cdot u^* \cdot (1 - \P (G_k )), \quad \forall k \geq 0.
\]

For each $k \in \{0 , 1, \dots, L - 1 \}$, define $y^{(k)}_0 := \P ( G_k )$, and for $m \geq 0$, define $y^{(k)}_{m+1} = u^* y^{(k)}_m + u^* p_{\min} \left( 1 - y^{(k)}_m  \right)$. One can use an inductive argument to show that 
\begin{equation} \label{eq:lower-bound-on-equilibrium} 
\P ( G_{k + mL } ) \geq y^{(k)}_{m} \quad \forall m \geq 0. 
\end{equation}

Observe that $y^{(k)}_{m+1}$ can be written as 
\[
y^{(k)}_{m+1} = \left( u^* - u^* p_{\min} \right)^{m+1} y^{(k)}_0 + u^* p_{\min} \sum_{j = 0}^{m} \left( u^* - u^* p_{\min} \right)^j .
\] 

Since $0 < u^* - u^* p_{\min} < 1$, we have that $\lim_{m \to \infty} y^{(k)}_m = \frac{ u^* p_{\min} }{ 1 - u^* + u^* p_{\min} } > 1 - \frac{\psi}{2}$. Thus, by \eqref{eq:lower-bound-on-equilibrium}, we have that $\P ( G_{k + mL} ) > 1 - \psi/2$ holds for all sufficiently large $m$, which proves the result.

\ \\

\section{Proof of Lemmas~\ref{lemma:continuous-cost}--\ref{lemma:continuous-max-state}}   \label{appendix:proofs-of-continuity-lemmas}    

\begin{lemma} \label{lemma:continuous-probabilities}
Fix $i \in \NN$ and $T \in \nn$. For any $(s,a^i) \in \xx \times \uu^i$ and \emph{$( \tilde{s}_k, \tilde{\ba}_k )_{k = 0}^T \in  \left( \xx \times \bU \right)^{T+1}$}, 
the mapping $F : \bGamma_{S} \to \rr$ given by 
\[
F( \bpi ) := \P^{\bpi} \left[ \bigcap_{k= 0}^T \{ x_k = \tilde{s}_k , \emph{$ \bu_k = \tilde{\ba}_k $} \} \middle| x_0 = s, u^i_0 = a^i \right], \quad \forall \bpi \in \bGamma_{S}
\]
is continuous.
\end{lemma}

\begin{proof}
We write the specified probability as 
\begin{align*}
&\P^{\bpi} \left[ \bigcap_{j = 0}^T \{ x_k = \tilde{s}_k , \bu_k = \tilde{\ba}_k \} \middle| x_0 = s, u^i_0 = a^i \right]  \\
= &\textbf{1} \{ \tilde{s}_0 = s, \tilde{a}^i_0 = a^i \} \cdot \prod_{ j \not= i } \pi^j ( \tilde{a}^j_0 | \tilde{s}_0 )   \times \prod_{k = 0}^{T-1} P \left( \tilde{s}_{k+1} | \tilde{s}_k , \tilde{\ba}_k \right) \cdot \prod_{k = 1}^T \bpi \left( \tilde{\ba}_k \middle| \tilde{s}_k \right),
\end{align*}
where $\bpi \left( \tilde{\ba}_k \middle| \tilde{s}_k \right) = \prod_{j \in \NN} \pi^j ( \tilde{a}^j_k | \tilde{s}_k )$. Since the conditional probability under $\P^{\bpi}$ of a given history of finite length is a finite product involving the components of $\bpi$, continuity follows.
\end{proof}

\subsection*{Proof of Lemma \ref{lemma:continuous-cost}} Fix player $i \in \NN, s \in \xx$, $\epsilon > 0$, and choose $T \in \nn$ large enough that 
\[  \frac{ ( \beta^i)^T }{ 1 - \beta^i } \cdot \max \left\{ \left| c^i ( \tilde{s}, \ba ) \right| : (\tilde{s}, \ba) \in \xx \times \bU \right\} < \frac{\epsilon}{4}  .
\]

\noindent Then, since 
\[
J^i ( \bpi ,  s ) = E^{\bpi} \left[ \sum_{t = 0}^T  (\beta^i)^t c^i ( x_t, \bu_t ) \middle| x_0 = s \right] + E^{\bpi} \left[ \sum_{t > T}  (\beta^i)^t c^i ( x_t, \bu_t ) \middle| x_0 = s \right]
\] 
for any $\bpi \in \bGamma_{S}$, we have that, for any $\bpi, \tilde{\bpi} \in \bGamma_{S}$,
\begin{align*}
&\left| J^i ( \bpi ,s ) - J^i ( \tilde{\bpi} , s ) \right| \leq  \\
&\left| E^{\bpi} \left[ \sum_{t = 0}^T  (\beta^i)^t c^i ( x_t, \bu_t ) \middle| x_0 = s \right] - E^{\tilde{\bpi} } \left[ \sum_{t = 0}^T  (\beta^i)^t c^i ( x_t, \bu_t ) \middle| x_0 = s \right] \right| + \frac{\epsilon}{2}.
\end{align*}

\noindent Define a function $C^i : \left( \xx \times \bU  \right)^{T+1} \to \rr$ by 
\[
C^i ( \tilde{s}_0, \tilde{\ba}_0, \cdots, \tilde{s}_T, \tilde{\ba}_T )   := \sum_{t = 0}^T ( \beta^i)^t c^i ( \tilde{s}_t, \tilde{\ba}_t ), \quad \forall ( \tilde{s}_k ,\tilde{\ba}_k )_{k=0}^T \in \left( \xx \times \bU \right)^{T+1}.
\]
We thus write
\begin{align*}
E^{\bpi} \left[ \sum_{t = 0}^T ( \beta^i)^t c^i( x_t, \bu_t ) \middle| x_0 = s \right] 
= \sum_{ \omega \in \left( \xx \times \bU \right)^{T+1} } C^i ( \omega )  \P^{\bpi} \left( 	(  x_k, \bu_k )_{k = 0}^T = \omega \middle| x_0 = s \right),
\end{align*}
for any $\bpi \in \bGamma_{S}$. From the latter expression and Lemma~\ref{lemma:continuous-probabilities}, we see that the mapping $\bpi \mapsto E^{\bpi} \left( \sum_{t = 0}^T ( \beta^i)^t c^i( x_t, \bu_t ) \middle| x_0 = s  \right) $ is continuous on $\bGamma_{S}$. 

Thus, taking $\bpi, \tilde{\bpi} \in \bGamma_{S}$ sufficiently close, under the metric $\textbf{d}$, we have that
\[
\left| E^{\bpi} \left[ \sum_{t = 0}^T  (\beta^i)^t c^i ( x_t, \bu_t ) \middle| x_0 = s \right] - E^{\tilde{\bpi} } \left[ \sum_{t = 0}^T  (\beta^i)^t c^i ( x_t, \bu_t ) \middle| x_0 = s \right] \right| < \frac{\epsilon}{2}.
\]
From our choice of $T$, we then also get that $\left| J^i ( \bpi , s)  - J^i ( \tilde{\bpi} , s ) \right| < \epsilon $, proving the lemma.

\subsection*{Proof of Lemma  \ref{lemma:continuous-Q-factors}} Fix player $i \in \NN$ and $(s,a^i ) \in \xx \times \uu^i$. We will show that the mapping $\bpi^{-i} \mapsto Q^{*i}_{\bpi^{-i}} ( s, a^i)$ is continuous on $\bGamma^{-i}_{S}$. Recall that for any $\bpi^{-i} \in \bGamma^{-i}_{S}$, the Q-factors are given by 
\[
Q^{*i}_{\bpi^{-i}} ( s, a^i ) := E^{ ( \pi^{*i} , \bpi^{-i} ) } \left[ \sum_{t = 0}^{\infty}  ( \beta^i )^t c^i ( x_t, \bu_t ) \middle| x_0 = s , u^i_0 = a^i \right],
\]
where $\pi^{*i} \in \BR^i_0 ( \bpi^{-i} )$ is any best-response to $\bpi^{-i}$. Since $\bpi^{-i} \in \bGamma^{-i}_{S}$ is stationary, player $i$ faces an MDP with controlled state process $\{ x_t \}_{t \geq 0}$. Therefore, some stationary and deterministic best-response policy exists. We denote the set of stationary and deterministic policies for player $i$ by $\DS^i \subset \Gamma^i_{S}$:
\[
\DS^i := \left\{ \pi^i \in \Gamma^i_{s} \middle| \forall \tilde{s} \in \xx, \exists \tilde{a}^i \in \uu^i : \pi^i ( \tilde{a}^i | \tilde{s} ) = 1 \right\}.
\]

Note that the set $\DS^i$ is finite and can be identified with the finite set $\{ f^i : \xx \to \uu^i \}$. Let $\pi^{*i} \in \BR^i_0 ( \bpi^{-i} ) \cap \DS^i$. We then have that, for any $\tilde{s} \in \xx$,
\begin{equation} \label{eq:cost-is-min-over-deterministic-stationary}
J^i ( \pi^{*i} , \bpi^{-i}  , \tilde{s} ) = \inf_{ \tilde{\pi}^i \in \Gamma^i } J^i  ( \tilde{\pi}^i , \bpi^{-i} , \tilde{s} ) = \min_{ \tilde{\pi}^i \in \DS^i } J^i  ( \tilde{\pi}^i , \bpi^{-i}  ,  \tilde{s} ).
\end{equation}

For each $\tilde{s} \in \xx$, the mapping $\bpi^{-i} \mapsto \min_{ \tilde{\pi}^i \in \DS^i } J^i ( \tilde{\pi}^i, \bpi^{-i}  , \tilde{s})$ is continuous on $\bGamma^{-i}_{S}$ by Lemma~\ref{lemma:continuous-cost}, as it is the pointwise minimum of finitely many continuous functions. 

Letting $\bpi^{*} = ( \pi^{*i}, \bpi^{-i} )$, we write $Q^{*i}_{\bpi^{-i}} ( s ,a^i )$ as 
\begin{align*}
&Q^{*i}_{\bpi^{-i}} ( s ,a^i )  	= E^{  \bpi^{*} } \left[ \sum_{ t = 0}^{\infty}  (\beta^i )^t c^i ( x_t, \bu_t ) \middle| x_0 = s, u^i_0 = a^i 		\right] \\
					= &E^{  \bpi^{*} } \left[ c^i ( x_0 , \bu_0 ) \middle| x_0 = s , u^i_0 = a^i  \right] + \beta^i \cdot E^{  \bpi^{*} } \left[   J^i ( \bpi^{*} , x_1 )  \middle| x_0 = s, u^i_0 = a^i \right],
\end{align*}
where the latter term is obtained by the tower property (conditioning on $x_0, u^i_0$, and $x_1$) and using the fact that $\bpi^{*}$ is stationary to simplify the resulting conditional expectation. Using \eqref{eq:cost-is-min-over-deterministic-stationary}, we then have 
\begin{align*}
& Q^{*i}_{\bpi^{-i}} ( s ,a^i )  	 \\
=& E^{  \bpi^{*} } \left[ c^i ( x_0 , \bu_0 ) \middle| x_0 = s , u^i_0 = a^i  \right] + \beta^i \cdot E^{  \bpi^{*} } \left[   \min_{ \tilde{\pi}^i \in \DS^i } J^i  ( \tilde{\pi}^i,  \bpi^{-i}  , x_1 )  \middle| x_0 = s, u^i_0 = a^i \right] .
\end{align*}

\noindent We observe that the first term does not depend on $\pi^{*i}$ and can be re-written as
\[
E^{  \bpi^{*} } \left[ c^i ( x_0 , \bu_0 ) \middle| x_0 = s , u^i_0 = a^i  \right]  = \sum_{ \ba^{-i} \in \bU^{-i}} \bpi^{-i} ( \ba^{-i} | s )c^i ( s, a^i, \ba^{-i} ),
\]
where $\bpi^{-i} ( \ba^{-i} | s ) = \prod_{j \not= i } \pi^j ( a^j | s )$. Note, further, that the mapping 
\[
\bpi^{-i} \mapsto \sum_{ \ba^{-i} \in \bU^{-i}} \bpi^{-i} ( \ba^{-i} | s ) c^i ( s, a^i ,\ba^{-i} )
\] 
is continuous on $\bGamma^{-i}_{S}$. Indeed, the second term does not depend on $\pi^{*i}$ either: using iterated expectations and conditioning additionally on $\bu^{-i}_0$, we write  
\begin{align*}
&E^{ \bpi^{*}  } \left[ \min_{ \tilde{\pi}^i \in \DS^i } J^i ( \tilde{\pi}^i, \bpi^{-i} , x_1 ) \middle| x_0 = s, u^i_0 = a^i \right] \\
= &\sum_{ \ba^{-i} \in \bU^{-i}} \bpi^{-i} ( \ba^{-i} | s ) \left( \sum_{ s' \in \xx } \min_{ \tilde{\pi}^i \in \DS^i } J^i (\tilde{\pi}^i, \bpi^{-i} , s' ) \cdot P ( s' | s , a^i, \ba^{-i} )  \right).
\end{align*}

\noindent From this, one sees that the mapping 
\[
\bpi^{-i} \mapsto \beta^i E^{ \bpi^{*}  } \left[ \min_{ \tilde{\pi}^i \in \DS^i } J^i  ( \tilde{\pi}^i, \bpi^{-i} , x_1 ) \middle| x_0 = s, u^i_0 = a^i \right]
\]
is also continuous on $\bGamma^{-i}_{S}$. Thus, $Q^{*i}_{\bpi^{-i}} ( s, a^i )$ is the sum of two functions that are continuous on $\bGamma^{-i}_{S}$ and so is itself continuous on $\bGamma^{-i}_{S}$, completing the proof.

\subsection*{Proofs of Lemmas \ref{lemma:continuous-min-action} and \ref{lemma:continuous-max-state}}

Lemma \ref{lemma:continuous-min-action} follows from Lemma~\ref{lemma:continuous-Q-factors}, as the function under consideration is the pointwise minimum of finitely many continuous functions.

Lemma~\ref{lemma:continuous-max-state} follows from Lemmas~\ref{lemma:continuous-cost} and \ref{lemma:continuous-min-action}, as the function under consideration is the pointwise maximum of finitely many continuous functions.

\newpage

\bibliographystyle{siamplain}
\bibliography{Bora-satisficing-feb2023}

\begin{thebibliography}{10}

\bibitem{agarwal2021theory}
{\sc A.~Agarwal, S.~M. Kakade, J.~D. Lee, and G.~Mahajan}, {\em On the theory
  of policy gradient methods: Optimality, approximation, and distribution
  shift}, Journal of Machine Learning Research, 22 (2021), pp.~1--76.

\bibitem{AY2017}
{\sc G.~Arslan and S.~Y\"{u}ksel}, {\em Decentralized {Q}-learning for
  stochastic teams and games}, IEEE Transactions on Automatic Control, 62
  (2017), pp.~1545--1558.

\bibitem{azar2011speedy}
{\sc M.~G. Azar, R.~Munos, M.~Ghavamzadeh, and H.~Kappen}, {\em Speedy
  {Q}-learning}, in Advances in Neural Information Processing Systems, 2011.

\bibitem{bai2020near}
{\sc Y.~Bai, C.~Jin, and T.~Yu}, {\em Near-optimal reinforcement learning with
  self-play}, Advances in Neural Information Processing Systems, 33 (2020),
  pp.~2159--2170.

\bibitem{Bowling}
{\sc M.~Bowling and M.~Veloso}, {\em Multiagent learning using a variable
  learning rate}, Artificial Intelligence, 136 (2002), pp.~215--250.

\bibitem{Brown1951iterative}
{\sc G.~W. Brown}, {\em Iterative solution of games by fictitious play},
  Activity Analysis of Production and Allocation, 13 (1951), pp.~374--376.

\bibitem{brown2018superhuman}
{\sc N.~Brown and T.~Sandholm}, {\em Superhuman {AI} for heads-up no-limit
  poker: Libratus beats top professionals}, Science, 359 (2018), pp.~418--424.

\bibitem{candogan2013near}
{\sc O.~Candogan, A.~Ozdaglar, and P.~A. Parrilo}, {\em Near-potential games:
  Geometry and dynamics}, ACM Transactions on Economics and Computation (TEAC),
  1 (2013), pp.~1--32.

\bibitem{carmona2016mean}
{\sc R.~Carmona, F.~Delarue, and D.~Lacker}, {\em Mean field games with common
  noise}, The Annals of Probability, 44 (2016), pp.~3740--3803.

\bibitem{cassidy1972solution}
{\sc R.~Cassidy, C.~Field, and M.~Kirby}, {\em Solution of a satisficing model
  for random payoff games}, Management Science, 19 (1972), pp.~266--271.

\bibitem{charnes1963deterministic}
{\sc A.~Charnes and W.~W. Cooper}, {\em Deterministic equivalents for
  optimizing and satisficing under chance constraints}, Operations Research, 11
  (1963), pp.~18--39.

\bibitem{Chasparis2013aspiration}
{\sc G.~C. Chasparis, A.~Arapostathis, and J.~S. Shamma}, {\em Aspiration
  learning in coordination games}, {SIAM} Journal on Control and Optimization,
  51 (2013), pp.~465--490.

\bibitem{chevalier2015micro}
{\sc G.~Chevalier, J.~Le~Ny, and R.~Malham{\'e}}, {\em A micro-macro traffic
  model based on mean-field games}, in 2015 American Control Conference (ACC),
  IEEE, 2015, pp.~1983--1988.

\bibitem{chien2011convergence}
{\sc S.~Chien and A.~Sinclair}, {\em Convergence to approximate {N}ash
  equilibria in congestion games}, Games and Economic Behavior, 71 (2011),
  pp.~315--327.

\bibitem{Claus1998}
{\sc C.~Claus and C.~Boutilier}, {\em The dynamics of reinforcement learning in
  cooperative multiagent systems}, in Proceedings of the Tenth Innovative
  Applications of Artificial Intelligence Conference, Madison, Wisconsin, 1998,
  pp.~746--752.

\bibitem{daskalakis2020independent}
{\sc C.~Daskalakis, D.~J. Foster, and N.~Golowich}, {\em Independent policy
  gradient methods for competitive reinforcement learning}, Advances in Neural
  Information Processing Systems, 33 (2020), pp.~5527--5540.

\bibitem{daskalakis2021independent}
{\sc C.~Daskalakis, D.~J. Foster, and N.~Golowich}, {\em Independent policy
  gradient methods for competitive reinforcement learning}, arXiv preprint
  arXiv:2101.04233,  (2021).

\bibitem{daskalakis2022complexity}
{\sc C.~Daskalakis, N.~Golowich, and K.~Zhang}, {\em The complexity of {M}arkov
  equilibrium in stochastic games}, arXiv preprint arXiv:2204.03991,  (2022).

\bibitem{devraj2017zap}
{\sc A.~M. Devraj and S.~Meyn}, {\em Zap {Q}-learning}, Advances in Neural
  Information Processing Systems, 30 (2017).

\bibitem{eksin2017distributed}
{\sc C.~Eksin and A.~Ribeiro}, {\em Distributed fictitious play for multiagent
  systems in uncertain environments}, IEEE Transactions on Automatic Control,
  63 (2017), pp.~1177--1184.

\bibitem{even2003learning}
{\sc E.~Even-Dar and Y.~Mansour}, {\em Learning rates for {Q}-learning},
  Journal of Machine Learning Research, 5 (2003), pp.~1--25.

\bibitem{fink1964equilibrium}
{\sc A.~M. Fink}, {\em Equilibrium in a stochastic $n$-person game}, Journal of
  Science of the Hiroshima University, Series AI (Mathematics), 28 (1964),
  pp.~89--93.

\bibitem{fischer2017connection}
{\sc M.~Fischer}, {\em On the connection between symmetric $n$-player games and
  mean field games}, The Annals of Applied Probability, 27 (2017),
  pp.~757--810.

\bibitem{foster2006regret}
{\sc D.~Foster and H.~P. Young}, {\em Regret testing: Learning to play {N}ash
  equilibrium without knowing you have an opponent}, Theoretical Economics, 1
  (2006), pp.~341--367.

\bibitem{gaunersdorfer1995fictitious}
{\sc A.~Gaunersdorfer and J.~Hofbauer}, {\em Fictitious play, {S}hapley
  polygons, and the replicator equation}, Games and Economic Behavior, 11
  (1995), pp.~279--303.

\bibitem{germano2007global}
{\sc F.~Germano and G.~Lugosi}, {\em Global {N}ash convergence of {F}oster and
  {Y}oung's regret testing}, Games and Economic Behavior, 60 (2007),
  pp.~135--154.

\bibitem{hart2003uncoupled}
{\sc S.~Hart and A.~Mas-Colell}, {\em Uncoupled dynamics do not lead to {N}ash
  equilibrium}, American Economic Review, 93 (2003), pp.~1830--1836.

\bibitem{hart2006stochastic}
{\sc S.~Hart and A.~Mas-Colell}, {\em Stochastic uncoupled dynamics and {N}ash
  equilibrium}, Games and Economic Behavior, 57 (2006), pp.~286--303.

\bibitem{hernandez2017survey}
{\sc P.~Hernandez-Leal, M.~Kaisers, T.~Baarslag, and E.~M. de~Cote}, {\em A
  survey of learning in multiagent environments: Dealing with
  non-stationarity}, arXiv preprint arXiv:1707.09183,  (2017).

\bibitem{hofbauer2003evolutionary}
{\sc J.~Hofbauer and K.~Sigmund}, {\em Evolutionary game dynamics}, Bulletin of
  the American Mathematical Society, 40 (2003), pp.~479--519.

\bibitem{Hu2003}
{\sc J.~Hu and M.~P. Wellman}, {\em Nash {Q}-learning for general-sum
  stochastic games}, Journal of Machine Learning Research, 4 (2003),
  pp.~1039--1069.

\bibitem{huang2007large}
{\sc M.~Huang, P.~E. Caines, and R.~P. Malham{\'e}}, {\em Large-population
  cost-coupled {LQG} problems with nonuniform agents: individual-mass behavior
  and decentralized $\epsilon$-{N}ash equilibria}, IEEE Transactions on
  Automatic Control, 52 (2007), pp.~1560--1571.

\bibitem{huang2006large}
{\sc M.~Huang, R.~P. Malham{\'e}, and P.~E. Caines}, {\em Large population
  stochastic dynamic games: closed-loop {M}c{K}ean-{V}lasov systems and the
  {N}ash certainty equivalence principle}, Communications in Information \&
  Systems, 6 (2006), pp.~221--252.

\bibitem{lasry2007mean}
{\sc J.-M. Lasry and P.-L. Lions}, {\em Mean field games}, Japanese Journal of
  Mathematics, 2 (2007), pp.~229--260.

\bibitem{lauer2000}
{\sc M.~Lauer and M.~Riedmiller}, {\em An algorithm for distributed
  reinforcement learning in cooperative multi-agent systems}, in In Proceedings
  of the Seventeenth International Conference on Machine Learning, Citeseer,
  2000.

\bibitem{leslie2005individual}
{\sc D.~Leslie and E.~Collins}, {\em Individual {Q}-learning in normal form
  games}, SIAM Journal on Control and Optimization, 44 (2005), pp.~495--514.

\bibitem{leslie2020best}
{\sc D.~S. Leslie, S.~Perkins, and Z.~Xu}, {\em Best-response dynamics in
  zero-sum stochastic games}, Journal of Economic Theory, 189 (2020),
  p.~105095.

\bibitem{levy2013discounted}
{\sc Y.~Levy}, {\em Discounted stochastic games with no stationary {N}ash
  equilibrium: two examples}, Econometrica, 81 (2013), pp.~1973--2007.

\bibitem{Littman1994}
{\sc M.~L. Littman}, {\em Markov games as a framework for multi-agent
  reinforcement learning}, in Machine Learning Proceedings 1994, Elsevier,
  1994, pp.~157--163.

\bibitem{Littman2001ffq}
{\sc M.~L. Littman}, {\em Friend-or-foe {Q}-learning in general-sum games}, in
  ICML, vol.~1, 2001, pp.~322--328.

\bibitem{liu2021sharp}
{\sc Q.~Liu, T.~Yu, Y.~Bai, and C.~Jin}, {\em A sharp analysis of model-based
  reinforcement learning with self-play}, in International Conference on
  Machine Learning, PMLR, 2021, pp.~7001--7010.

\bibitem{marden2012revisiting}
{\sc J.~R. Marden and J.~S. Shamma}, {\em Revisiting log-linear learning:
  Asynchrony, completeness and payoff-based implementation}, Games and Economic
  Behavior, 75 (2012), pp.~788--808.

\bibitem{Marden2009payoff}
{\sc J.~R. Marden, H.~P. Young, G.~Arslan, and J.~S. Shamma}, {\em Payoff-based
  dynamics for multiplayer weakly acyclic games}, SIAM Journal on Control and
  Optimization, 48 (2009), pp.~373--396.

\bibitem{marden2014achieving}
{\sc J.~R. Marden, H.~P. Young, and L.~Y. Pao}, {\em Achieving {P}areto
  optimality through distributed learning}, SIAM Journal on Control and
  Optimization, 52 (2014), pp.~2753--2770.

\bibitem{matignon2007hysteretic}
{\sc L.~Matignon, G.~J. Laurent, and N.~Le~Fort-Piat}, {\em Hysteretic
  {Q}-learning: an algorithm for decentralized reinforcement learning in
  cooperative multi-agent teams}, in 2007 IEEE/RSJ International Conference on
  Intelligent Robots and Systems, IEEE, 2007, pp.~64--69.

\bibitem{matignon2009coordination}
{\sc L.~Matignon, G.~J. Laurent, and N.~Le~Fort-Piat}, {\em Coordination of
  independent learners in cooperative {M}arkov games.}, HAL preprint
  hal-00370889,  (2009).

\bibitem{matignon2012survey}
{\sc L.~Matignon, G.~J. Laurent, and N.~Le~Fort-Piat}, {\em Independent
  reinforcement learners in cooperative {M}arkov games: a survey regarding
  coordination problems.}, Knowledge Engineering Review, 27 (2012), pp.~1--31.

\bibitem{matsui1992best}
{\sc A.~Matsui}, {\em Best response dynamics and socially stable strategies},
  Journal of Economic Theory, 57 (1992), pp.~343--362.

\bibitem{mnih2015human}
{\sc V.~Mnih, K.~Kavukcuoglu, D.~Silver, A.~A. Rusu, J.~Veness, M.~G.
  Bellemare, A.~Graves, M.~Riedmiller, A.~K. Fidjeland, G.~Ostrovski, et~al.},
  {\em Human-level control through deep reinforcement learning}, {N}ature, 518
  (2015), pp.~529--533.

\bibitem{ornik2018deception}
{\sc M.~Ornik and U.~Topcu}, {\em Deception in optimal control}, in 2018 56th
  Annual Allerton Conference on Communication, Control, and Computing
  (Allerton), IEEE, 2018, pp.~821--828.

\bibitem{ozdaglar2021independent}
{\sc A.~Ozdaglar, M.~O. Sayin, and K.~Zhang}, {\em Independent learning in
  stochastic games}, arXiv preprint arXiv:2111.11743,  (2021).

\bibitem{radner1975satisficing}
{\sc R.~Radner}, {\em Satisficing}, in Optimization Techniques IFIP Technical
  Conference, Springer, 1975, pp.~252--263.

\bibitem{Robinson1951iterative}
{\sc J.~Robinson}, {\em An iterative method of solving a game}, Annals of
  Mathematics, {\bf 54} (1951), pp.~296--301.

\bibitem{roughgarden2010algorithmic}
{\sc T.~Roughgarden}, {\em Algorithmic game theory}, Communications of the ACM,
  53 (2010), pp.~78--86.

\bibitem{sanjari2020optimality}
{\sc S.~Sanjari, N.~Saldi, and S.~Y{\"u}ksel}, {\em Optimality of independently
  randomized symmetric policies for exchangeable stochastic teams with
  infinitely many decision makers}, arXiv preprint arXiv:2008.11570,  (2020).

\bibitem{sanjari2018optimal}
{\sc S.~Sanjari and S.~Y{\"u}ksel}, {\em Optimal solutions to infinite-player
  stochastic teams and mean-field teams}, IEEE Transactions on Automatic
  Control, 66 (2020), pp.~1071--1086.

\bibitem{sanjari2019optimal}
{\sc S.~Sanjari and S.~Y{\"u}ksel}, {\em Optimal policies for convex symmetric
  stochastic dynamic teams and their mean-field limit}, SIAM Journal on Control
  and Optimization, 59 (2021), pp.~777--804.

\bibitem{sayin2021decentralized}
{\sc M.~Sayin, K.~Zhang, D.~Leslie, T.~Basar, and A.~Ozdaglar}, {\em
  Decentralized {Q}-learning in zero-sum {M}arkov games}, Advances in Neural
  Information Processing Systems, 34 (2021), pp.~18320--18334.

\bibitem{sayin2022fictitious}
{\sc M.~O. Sayin, F.~Parise, and A.~Ozdaglar}, {\em Fictitious play in zero-sum
  stochastic games}, SIAM Journal on Control and Optimization, 60 (2022),
  pp.~2095--2114.

\bibitem{schulman2015trust}
{\sc J.~Schulman, S.~Levine, P.~Abbeel, M.~Jordan, and P.~Moritz}, {\em Trust
  region policy optimization}, in International Conference on Machine Learning,
  PMLR, 2015, pp.~1889--1897.

\bibitem{schulman2017proximal}
{\sc J.~Schulman, F.~Wolski, P.~Dhariwal, A.~Radford, and O.~Klimov}, {\em
  Proximal policy optimization algorithms}, arXiv preprint arXiv:1707.06347,
  (2017).

\bibitem{sen94}
{\sc S.~Sen, M.~Sekaran, and J.~Hale}, {\em Learning to coordinate without
  sharing information}, in Proceedings of the 12th National Conference on
  Artificial Intelligence, 1994, pp.~426--431.

\bibitem{silver2016mastering}
{\sc D.~Silver, A.~Huang, C.~J. Maddison, A.~Guez, L.~Sifre, G.~Van
  Den~Driessche, J.~Schrittwieser, I.~Antonoglou, V.~Panneershelvam,
  M.~Lanctot, et~al.}, {\em Mastering the game of {G}o with deep neural
  networks and tree search}, Nature, 529 (2016), pp.~484--489.

\bibitem{silver2017mastering}
{\sc D.~Silver, J.~Schrittwieser, K.~Simonyan, I.~Antonoglou, A.~Huang,
  A.~Guez, T.~Hubert, L.~Baker, M.~Lai, A.~Bolton, et~al.}, {\em Mastering the
  game of {G}o without human knowledge}, Nature, 550 (2017), pp.~354--359.

\bibitem{simon1956rational}
{\sc H.~A. Simon}, {\em Rational choice and the structure of the environment.},
  Psychological review, 63 (1956), p.~129.

\bibitem{sutton2018reinforcement}
{\sc R.~S. Sutton and A.~G. Barto}, {\em Reinforcement learning: An
  introduction}, MIT press, 2018.

\bibitem{swenson2018distributed}
{\sc B.~Swenson, C.~Eksin, S.~Kar, and A.~Ribeiro}, {\em Distributed inertial
  best-response dynamics}, IEEE Transactions on Automatic Control, 63 (2018),
  pp.~4294--4300.

\bibitem{szepesvari1997asymptotic}
{\sc C.~Szepesv{\'a}ri}, {\em The asymptotic convergence-rate of {Q}-learning},
  Advances in neural information processing systems, 10 (1997).

\bibitem{tan1993multi}
{\sc M.~Tan}, {\em Multi-agent reinforcement learning: Independent vs.
  cooperative agents}, in Proceedings of the Tenth International Conference on
  Machine Learning, 1993, pp.~330--337.

\bibitem{tsitsiklis1994asynchronous}
{\sc J.~N. Tsitsiklis}, {\em Asynchronous stochastic approximation and
  {Q}-learning}, Machine Learning, 16 (1994), pp.~185--202.

\bibitem{Watkins89}
{\sc C.~Watkins}, {\em Learning from Delayed Rewards}, PhD thesis, Cambridge
  University, 1989.

\bibitem{WatkinsDayan92}
{\sc C.~Watkins and P.~Dayan}, {\em Q-{L}earning}, Machine Learning, {{\bf{
  8}}} (1992), pp.~279--292.

\bibitem{wei2017online}
{\sc C.-Y. Wei, Y.-T. Hong, and C.-J. Lu}, {\em Online reinforcement learning
  in stochastic games}, Advances in Neural Information Processing Systems, 30
  (2017).

\bibitem{wei2016lenient}
{\sc E.~Wei and S.~Luke}, {\em Lenient learning in independent-learner
  stochastic cooperative games}, The Journal of Machine Learning Research, 17
  (2016), pp.~2914--2955.

\bibitem{YAY-TAC}
{\sc B.~Yongacoglu, G.~Arslan, and S.~Y{\"u}ksel}, {\em Decentralized learning
  for optimality in stochastic dynamic teams and games with local control and
  global state information}, IEEE Transactions on Automatic Control, 67 (2021),
  pp.~5230--5245.

\bibitem{yongacoglu2022independent}
{\sc B.~Yongacoglu, G.~Arslan, and S.~Y{\"u}ksel}, {\em Independent learning
  and subjectivity in mean-field games}, in 2022 IEEE 61st Conference on
  Decision and Control (CDC), IEEE, 2022, pp.~2845--2850.

\bibitem{zhang2021multi}
{\sc K.~Zhang, Z.~Yang, and T.~Ba{\c{s}}ar}, {\em Multi-agent reinforcement
  learning: A selective overview of theories and algorithms}, Handbook of
  Reinforcement Learning and Control,  (2021), pp.~321--384.

\end{thebibliography}

\end{document}